\newcommand{\B}{\mathcal{B}}
\newcommand{\C}{\mathcal{C}}
\newcommand{\F}{\mathcal{F}}
\newcommand{\myprob}[1]{\mathbb{P}\left\{#1\right\}}
\newcommand{\myexp}[1]{\mathbb{E}\left[#1\right]}
\newtheorem{lemma}{Lemma}
\newcommand{\nn}{\nonumber\\}
\begin{document}

\title{Proportional Fair Coding for Wireless Mesh Networks}
\author{K.~Premkumar, Xiaomin Chen, and Douglas J.~Leith
\thanks{The authors are with the Hamilton Institute, 
        National University of Ireland Maynooth, Ireland.
		E--mail: \{Premkumar.Karumbu, Xiaomin.Chen,
		Doug.Leith\}@nuim.ie}
		

       \thanks{This work is supported by Science Foundation Ireland 
	   under Grant No. 07/IN.1/I901.}
       }
\maketitle

\begin{abstract}
We consider multi--hop wireless networks carrying unicast flows for
multiple users. Each flow has a specified delay deadline, and the lossy
wireless links are modelled as binary symmetric channels (BSCs). Since
transmission time, also called airtime, on the links is shared amongst 
flows, increasing the airtime for one flow comes at the cost of reducing 
the airtime available to other flows sharing the same link. We derive 
the joint allocation of flow airtimes and coding rates that achieves the
proportionally fair throughput allocation. This utility optimisation
problem is non--convex, and one of the technical contributions of this
paper is to show that the proportional fair  utility optimisation can
nevertheless be decomposed into a sequence of convex optimisation
problems. The solution to this sequence of convex problems is the unique
solution to the original non--convex optimisation. Surprisingly, this
solution can be written in an explicit form that yields considerable
insight into the nature of the proportional fair joint airtime/coding
rate allocation. To our knowledge, this is the first time that the
utility fair joint allocation of airtime/coding rate has been analysed,
and also, one of the first times that utility fairness with delay
deadlines has been considered.
\end{abstract}

\begin{IEEEkeywords}
Binary symmetric channels, code rate selection, cross--layer
optimisation, optimal packet size, network utility maximisation, 
resource allocation, scheduling 
\end{IEEEkeywords}

\IEEEpeerreviewmaketitle

\section{Introduction}
In this paper, we consider wireless mesh networks with lossy links and
flow delay deadlines. Packets which are decoded after a delay deadline
are treated as losses. We derive the joint allocation of flow airtimes
and coding rates that achieves the proportionally fair throughput
allocation. To our knowledge, this is the first time that the utility
fair joint allocation of airtime/coding rate has been analysed, and
also, one of the first times that utility fairness with delay deadlines
has been considered (also, see \cite{li-atilla}, \cite{rsrikant}).

In the special cases where all links in a network are loss--free or all
flow delay deadlines are infinite, we show that the proportionally fair
utility optimisation decomposes into decoupled airtime and coding rate
allocation tasks. That is, a layered approach that separates MAC
scheduling and packet coding rate selection is optimal. This corresponds
to the current practice, and these tasks can be solved separately using a
wealth of classical techniques.

\begin{figure}
\centering
\subfigure[Topolopgy]
{
\includegraphics[width=0.35\columnwidth]{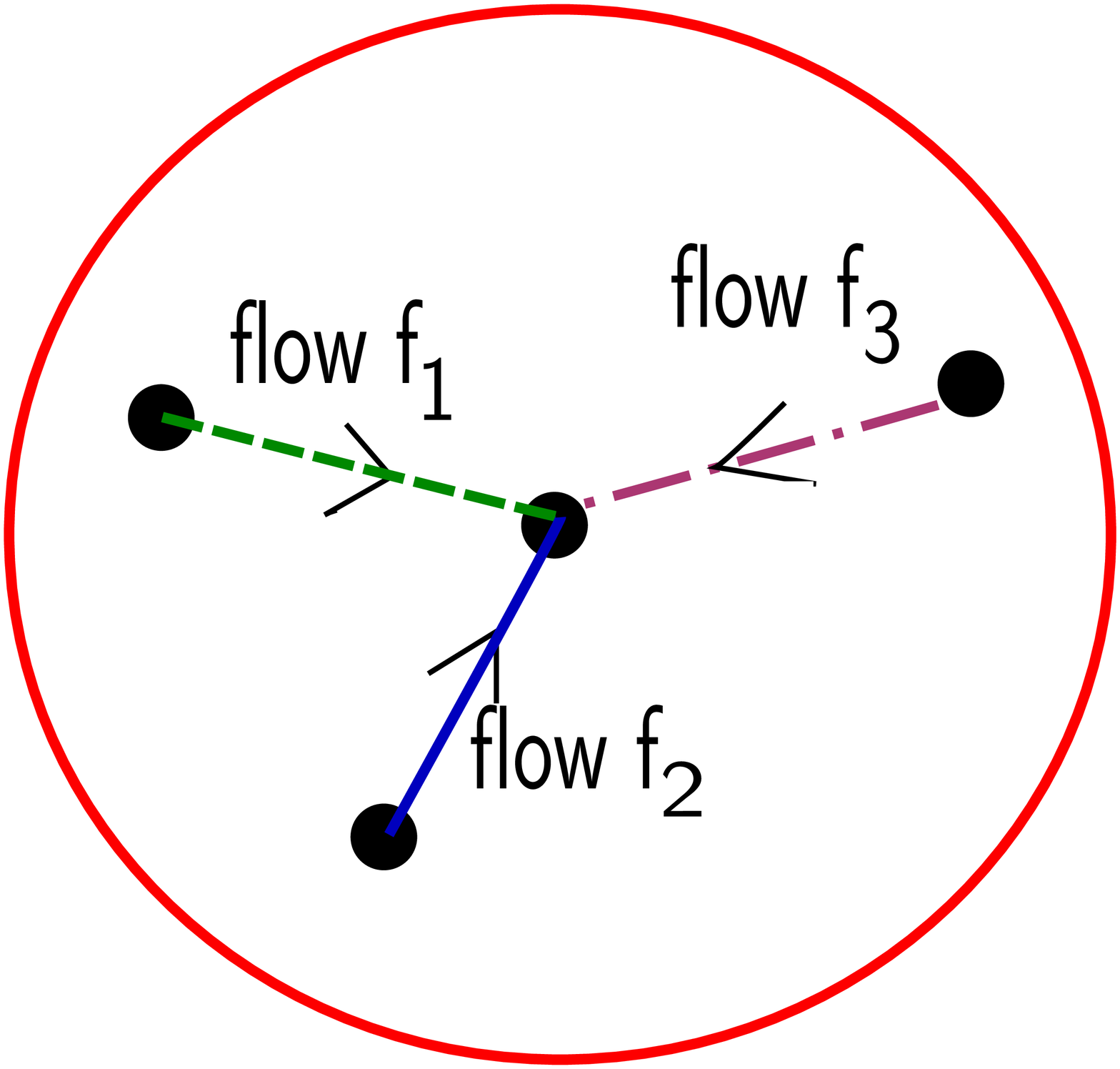}}
\subfigure[Optimum Utility vs Classical Utility]{
\includegraphics[width=50mm, height=40mm]{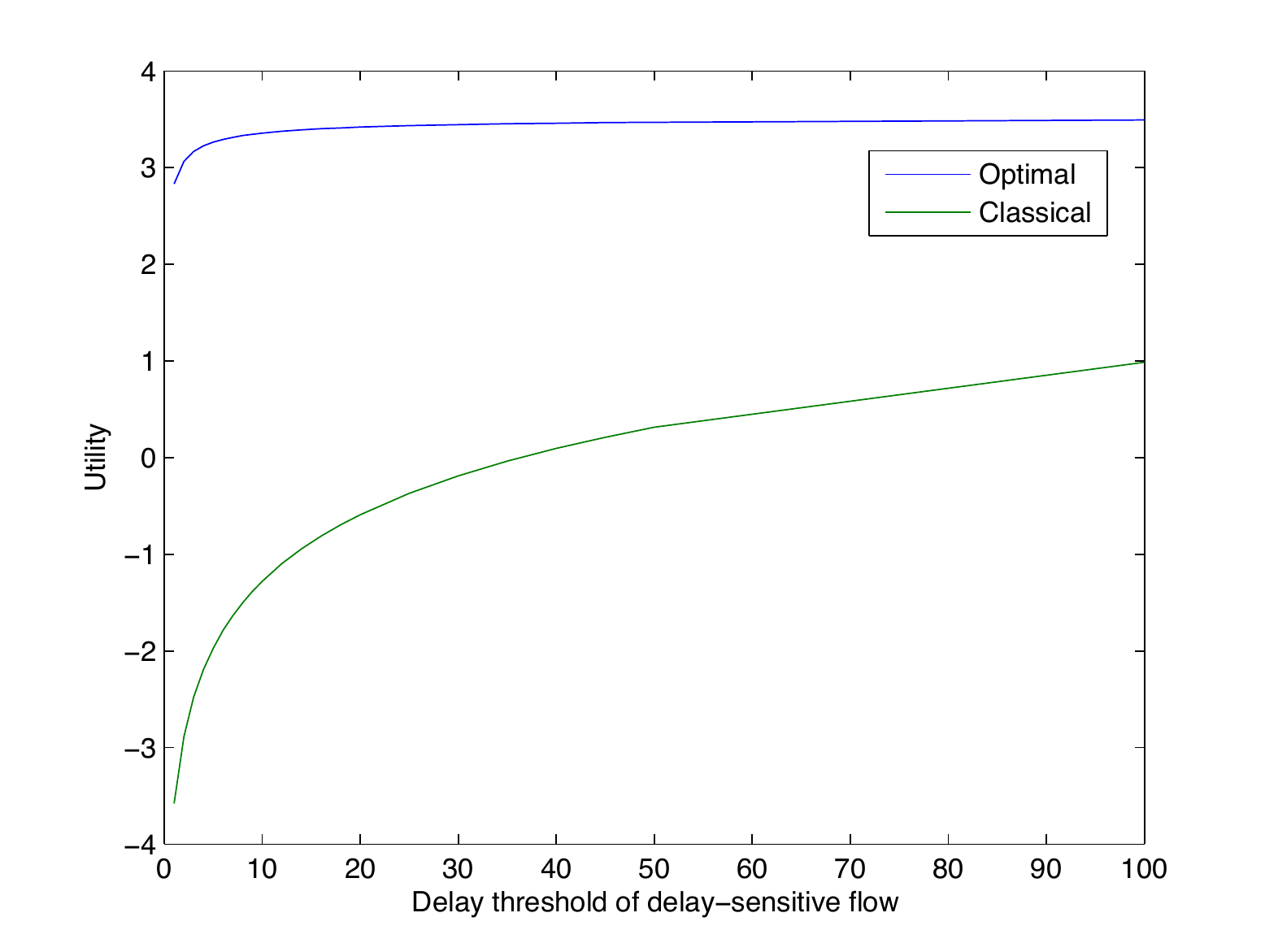}
}
\caption{An illustration of a single cell wireless LAN with 3 flows. The
central node represents the Access Point (AP), and the other nodes
represent the wireless STAtions.}
\label{fig:single_cell}
\end{figure}

However, we show that no such decomposition occurs when one or more
links are lossy or, one or more flows have finite delay deadlines.
Instead, in such cases, it is necessary to jointly optimise the flow
airtimes and coding rates. Further, we show that the resulting
allocation of airtime and coding rates is qualitatively different from
classical results. For example, consider a single hop wireless network
carrying three flows, see Figure \ref{fig:single_cell}. Flow $f_1$ is a
delay--sensitive flow (e.g. video) while flows $f_2$ and $f_3$ are
delay--insensitive flows (e.g. TCP data). Transmissions are scheduled in
a TDMA manner, and the delay deadline for flow $f_1$ is one schedule
period, while the delay deadline for flows $f_2$ and $f_3$ is infinite.
The channel symbol error rate is $10^{-2}$ for all flows, and flows use
MDS codes for error correction. The proportionally fair airtime and
coding rate allocation that we show in this paper (see
Eqns.~\eqref{eqn:kkt-n} and \eqref{eqn:kkt-r}) results in the allocation
of  41\% of the airtime for flow $f_1$ while flow $f_2$ and flow $f_3$
each receive 29.5\%. Observe that the proportionally fair allocation
assigns unequal airtimes to the flows, which is a notable departure from
the usual equal--airtime property of the proportional fair allocation
when selection of delay deadlines and coding rate are not included, e.g.
see \cite{eq_air_time}.  The optimal coding rate is 0.62 for flow $f_1$
and 0.97 for flows $f_2$ and $f_3$. The coding rate for flow $f_1$ is
much lower than for flows $f_2$ and $f_3$, since a smaller block size
must be used by flow $f_1$ (and more redundant symbols for
error--recovery) in order to respect the delay deadline. Due to the
delay deadline, these optimal coding rates yield non--zero loss rates.
For flow $f_1$, the packet loss rate at the receiver, after decoding, is
20\%, whereas flows $f_2$ and $f_3$ are loss free. This highlights an
important feature of the joint airtime and coding rate utility
optimisation. Namely, that it allows the throughput/loss/delay
trade--off amongst flows sharing network resources to be performed in a
principled, fair manner. Without consideration of coding rate, the
trade--off between throughput and loss cannot be fully understood or
optimally managed. Without consideration of airtime, the contention
between flows for shared network resources cannot be fully captured.

Proportional fairness can be formulated as a utility maximisation task,
with the utility being the sum of log flow rates.  Figure~1(b) compares
the optimal network utility with that obtained with a classical type of
approach where all flows are allocated equal airtime and the coding
rates are chosen based on the channel error probabilities alone (this
corresponds to ignoring the delay deadline of flow $f_1$).  It can be
seen that the optimal approach that we present in this paper potentially
offers significant performance benefits over classical methods.

We note that one of the reasons why the joint selection of
airtime/coding rate has not been previously studied is that the
proportional fair utility optimisation is non--convex, and hence,
powerful tools from convex optimisation cannot be applied directly.
Also, the study of the throughput performance by jointly considering the
coding and the MAC has not been performed before. One of the technical
contributions of this paper is to show that the proportional fair utility
optimisation can nevertheless be decomposed into a sequence of convex
optimisation problems. The solution to this sequence of convex problems
is the unique solution to the original non--convex optimisation.
Moreover, this solution can be written in an explicit form thereby
yielding considerable insight into the nature of the proportional fair
airtime/coding rate allocation.

The rest of the paper is organised as follows. The related literature on
utility optimal resource allocation is discussed in
Section~\ref{sec:related_work}. Section~\ref{sec:network_model} defines
 the network model; in particular, we describe the mesh network
architecture, the traffic model, and the channel model. We also discuss
the transmission scheduling model, decoding delay deadline, and the
network constraints. In Section~\ref{sec:problem_formulation}, we obtain
a measure for the end--to--end packet decoding error, and describe the
throughput of the network. In
Section~\ref{sec:network_utility_optimisation}, we formulate a network
utility maximisation problem subject to constraints on the transmission
schedule lengths, and discuss the optimization framework. 
In Section~\ref{sec:two_special_cases}, we discuss two special cases
of networks: delay--insensitive and loss--free networks, and show that
the tasks of obtaining optimal airtimes and coding rates decouple in
these special cases. 
We discuss
the optimal airtime/coding solution with some examples in
Section~\ref{sec:discussion}. 
Finally we conclude in Section~\ref{sec:conclusions}. The proofs of
Lemmas and Theorems are provided in the Appendix.

\section{Related Work}
\label{sec:related_work}
We consider a multi--hop Network Utility Maximisation (NUM) problem with
deadline constraints and with a practical model for the PHY layer. By
means of channel coding, we try to recover a packet from the channel
errors. Having a low coding rate helps in recovering the packets, but at
the cost of a small fraction of payload, and at the cost of the
transmission airtimes of other flows. Thus, we consider the problem of
resource allocation that answers the following question: {\em how to
allocate throughput across competing flows with each flow seeing
different channel conditions and respecting the delay deadline}. 

The problem of Network Utility Maximisation (NUM) has been studied in
various contexts, with NUM as a network layering tool introduced in
\cite{chiang}. 

Much of the work on NUM is concerned with the
flow scheduling and throughput allocation that achieves the network
stability region.  This work focuses on throughput and largely ignores delay constraints.  Resource allocation problems from the viewpoint of network
control and stability is studied by Georgiadis et al. in
\cite{now_res_alloc}. Network flow scheduling problems are studied in
a utility optimal framework by Shakkottai and Srikant in
\cite{net-opt}. In all these works and the references therein, the
emphasis is on the MAC layers and above. In \cite{now_res_alloc}, an
energy optimal scheduling problem is studied in which the PHYsical layer
is also considered. 

Some recent work explicitly includes delay constraints in the utility optimisation.  In \cite{li-atilla}, Li and Eryilmaz studied the problem of end--to--end
delay constrained scheduling in multi--hop networks. They propose
algorithms based on Lyapunov drift minimisation and pricing, and  
show that by dynamically selecting service disciplines, the proposed
algorithms significantly outperform existing throughput--optimal
scheduling algorithms. In \cite{rsrikant}, Jaramillo and Srikant studied
a resource allocation problem in ad hoc networks with elastic and
inelastic traffic with deadlines for packet reception, and obtained 
joint congestion control and scheduling algorithm that maximises a
network utility. In this work the focus is on congestion control and
scheduling, with the PHYsical layer considered to be
error--free. 

A short, preliminary version of the work in the current paper was presented in \cite{bsc}.

\section{Network Model}
\label{sec:network_model}

\subsection{Cellular Mesh Architecture}
\label{subsec:cellular_mesh_architecture}
We consider networks consisting of a set of $C \geq 1$ cells, $\C =
\{1,2,\cdots,C\}$ which define the ``interference domains'' in the
network. We allow intra--cell interference (\emph{i.e} transmissions by
nodes within the same cell interfere) but assume that there is no
inter--cell interference. This captures, for example, common network
architectures where nodes within a given cell use the same radio channel
while neighbouring cells using orthogonal radio channels. Within each
cell, any two nodes are within the decoding range of each other, and
hence, can communicate with each other. The cells are interconnected
using multi--radio bridging nodes to create a multi--hop wireless
network. A multi--radio bridging node $i$ connecting the set of cells
$\B(i)=\{c_1,..,c_n\}\subset \C$ can be thought of as a set of $n$
single radio nodes, one in each cell, interconnected by a high--speed,
loss--free wired backplane.  See, for example,
Figure~\ref{fig:mesh_network}.
\begin{figure}
\centering
\includegraphics[width=0.8\columnwidth]{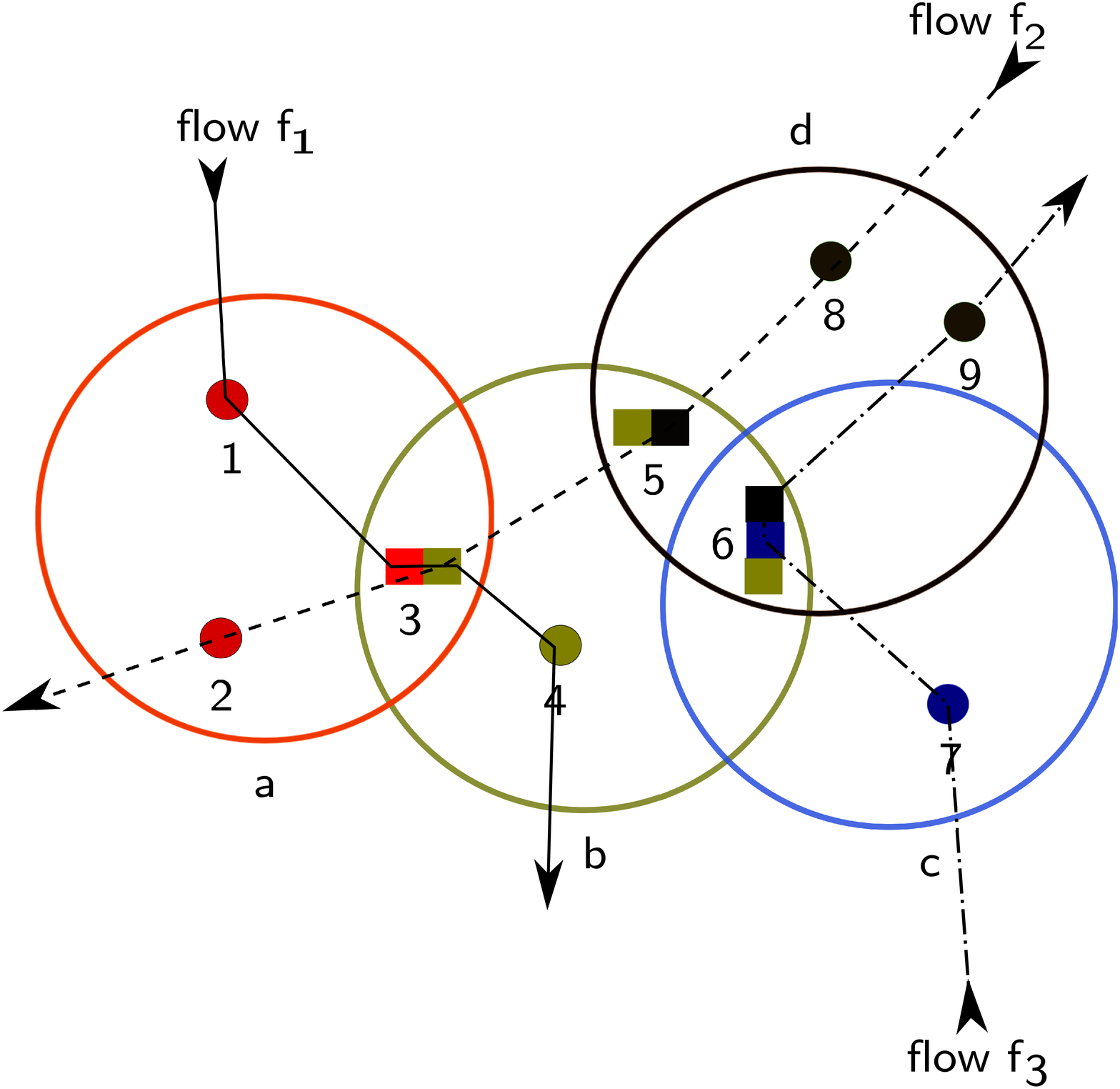}
\caption{{\bf An illustration of a wireless mesh network with 4 cells.}
Cells $a$, $b$, $c$, and $d$ use orthogonal channels CH$_1$, CH$_2$,
CH$_3$, and CH$_4$ respectively. Nodes 3, 5, and 6 are {\em bridge
nodes}. The bridge node 3 (resp. 5 and 6) is provided a time slice of
each of the channels CH$_1$ \& CH$_2$ (resp. CH$_2$ \& CH$_4$ for node 5
and CH$_2$\& CH$_3$\& CH$_4$ for node 6). Three flows $f_1, f_2$, and
$f_3$ are considered. In this example, $\C_{f_1} = \{a,b\}$, $\C_{f_2} =
\{d,b,a\}$, and $\C_{f_3} = \{c,d\}$.} 
\label{fig:mesh_network}
\end{figure}

\subsection{Unicast Flows}
\label{subsec:unicast_flows}
Data is transmitted across this multi--hop network as a set $\F$ $=$
$\{1,2,\cdots,F\}$, $F\geq 1$ of unicast flows. The route of each flow
$f$ $\in \F$ is given by $\C_f$ $=$ $\{c_1(f), c_2(f), \cdots,
c_{\ell_f}(f)\}$, where the source node $s(f) \in c_1(f)$ and the
destination node $d(f) \in c_{\ell_f}(f)$. We assume loop--free flows
(i.e., no two cells in $\C_f$ are same).  

\subsection{Binary Symmetric Channels}
\label{subsec:binary_symmetric_channels}
We associate a binary random variable $E_{f,c}[b]$ with the $b$'th bit
transmitted by flow $f$ in cell $c$. $E_{f,c}[b]=0$ indicates that the
bit is received correctly, and $E_{f,c}[k]=1$ indicates that the bit is
received incorrectly, i.e., the bit is ``flipped''. We assume that
$E_{f,c}[1], E_{f,c}[2],\cdots$ are independent and identically
distributed (iid), and ${\mathbb P}\{E_{f,c}[b]=1\}=\alpha_{f,c} \in
[0,0.5)$. That is, we have a binary symmetric channel with cross-over
probability $\alpha_{f,c}$. A transmitted bit may be ``flipped''
multiple times as it travels along the route of flow $f$, and is
received incorrectly at the flow destination only if there is an odd
number of such flips.  The end-to-end cross-over probability along the
route of flow $f$ is therefore given by 
\begin{eqnarray*}
\alpha_f = \sum_{\{ x_c \in \{0,1\}, c\in\C_f:\underset{c \in
\C_f}{\sum} x_c \ \text{is odd}\}} \ \ 
\prod_{c \in \C_f} \alpha_{f,c}^{x_c}
\ \left(1-\alpha_{f,c}\right)^{1-x_c}.
\end{eqnarray*}
Note that we can accommodate transmission of symbols from any $2^m =
M$--ary alphabet (i.e. not just transmission of binary symbols) by
associating $m$ channel uses of the BSC for every transmitted symbol.
The symbol error probability (for any $m \geq 1$) is then given by
$\beta_f = 1 - (1-\alpha_f)^{m}$.

In this channel model, the channel processes across time are independent
copies of the BSCs. In practice this can be realised by means of an
interleaver of sufficient depth (after the channel encoder), which
randomly shuffles the encoded symbols, combined with a de-interleaver
(before the channel decoder) at the receiver. This interleaving and
de--interleaving randomly mixes any channel fades, which can then be
modelled as independent channel processes across time. 

\subsection{Flow Transmission Scheduling}
\label{subsec:flow_transmission_scheduling}
A scheduler assigns a time slice of duration $T_{f,c} > 0$ time units to
each flow $f$ that flows through cell $c$, subject to the constraint
that $\sum_{f:c\in \C_f} T_{f,c}\le T_c$ where $T_c$ is the period of
the schedule in cell $c$. We consider a periodic scheduling strategy in
which, in each cell $c$, service is given to the flows in a round robin
fashion, and that each flow $f$ in cell $c$ gets a time slice of
$T_{f,c}$ units in every schedule. 

\subsection{Flow Decoding Delay Deadline}
\label{subsec:flow_decoding_delay_deadline}
At the source node $s(f)$ for flow $f$, we assume that $k_f$ symbols
arrive in each time slot, which allows us to simplify the analysis by
ignoring queueing. Information symbols are formed into blocks of $D_f
k_f$ symbols, where $D_f \in \{1,2,3,\cdots\}$ is the number of time
slots that the block may span. Each block of $D_f k_f$ information
symbols is encoded into a block of $D_f n_f$ coded symbols, where $n_f =
k_f/r_f$ symbols, with coding rate $0< {r_f}\le1$. Here, $n_f$ is the
number of encoded symbols transmitted in one slot i.e. the transmitted
packet size. The code employed for encoding is discussed in
Section~\ref{sec:problem_formulation}. The quantity $D_f$ is a user
or operator supplied quality of service parameter. It specifies the
decoding delay deadline for flow $f$, since after the flow destination
has collected at most $D_f$ successive coded packets it must attempt to
decode the encoded information symbols.

\subsection{Network Constraints on Coding Rate}
\label{subsec:network_constraints_on_coding_rate}
For flow $f$ in cell $c$, let $w_{f,c}$ be the rate of transmission in
symbols/second, which is determined by the modulation and spectral bandwidth
used for signal transmission and the within-cell FEC used. Each cell $c \in \C_f$ along the route of
flow $f$ allocates an airtime of at least $\frac{n_f}{w_{f,c}}$ in order to
transmit the packets of flow $f$. Let  $\F_c := \{f \in \F: c \in
\C_f\}$ be the set of flows that are routed through cell $c$. We recall
that the transmissions in any cell $c$ are scheduled in a TDMA fashion,
and hence, the total time required for transmitting packets for all
flows in cell $c$ is given by $\sum_{f \in \F_c} \frac{n_f}{w_{f,c}}$.
Since, for cell $c$, the transmission schedule interval is $T_c$ units
of time, the encoded packet size $n_f$ must satisfy the schedulability
constraint $$ \sum_{f \in \F_c} \frac{n_f}{w_{f,c}} \leqslant T_c $$
Note that since we require sufficient transmit time at each cell along
route $\C_f$ to allow $n_f$ coded symbols to be transmitted in every
schedule period, there is no queueing at the cells along the route of a
flow.

\section{Packet Error Probability}
\label{sec:problem_formulation}
Each transmitted symbol of flow $f$ reaches the destination node
erroneously with probability $\beta_f$. Hence, to help protect against
errors when recovering the information symbols, we encode information
symbols at the source nodes using a block code (we note here that a
convolutional code with zero--padding is also a block code). An
$(n,k,d)$ block code has the following properties. The encoder takes a
sequence of $k$ information symbols as input, and generates a sequence
of $n \geq k$ coded symbols as output. The decoder takes a sequence of
$n$ coded symbols as input, and outputs a sequence of $k$ information
symbols. These information symbols will be error--free provided no more
than  $\lfloor \frac{d-1}{2}\rfloor$ of the coded symbols are corrupted.
The Singleton bound \cite{mc_williams_sloane} tells us that $d \leqslant
n - k +1$, with equality for maximum--distance separable (MDS) codes.
Thus, an MDS code can correct up to
\begin{align}\label{eq:mdsbound}
\left\lfloor\frac{d-1}{2}\right\rfloor =\left\lfloor\frac{n-k}{2}\right\rfloor
\end{align}
errors. Examples for MDS codes include Reed--Solomon codes
\cite{mc_williams_sloane}, and MDS--convolutional codes
\cite{mds_conv_codes}. In \cite{mds_conv_codes}, the authors show the
existence of MDS--convolutional codes for any code rate. Hereafter, we
will make use of Eqn.~\eqref{eq:mdsbound}, and so, confine consideration to MDS
codes. However, the analysis can be readily extended to other types of
code provided a corresponding bound on $d$ is available. 

Consider a coded block of flow $f$ and let $i\in \{1, 2, \cdots,
D_fn_f\}$ index the symbols in the block. Let $E_f[i]$ be a binary random
variable which equals $0$ when the $i$'th coded symbol is received
correctly and which equals $1$ when it is received corrupted. ${\mathbb
P}\left\{ E_f[i] = 1\right\} = \beta_f$ and  ${\mathbb P}\left\{ E_f[i] =
0\right\} = 1-\beta_f$. From Eqn.~\eqref{eq:mdsbound}, the probability
of the block being decoded incorrectly is given by
$$
\widetilde{e}_f = 
{\mathbb P}\left\{\sum_{i=1}^{D_f n_f}E_f[i] > \frac{D_f n_f- D_fk_f}{2} \right\}
$$
The symbol errors $E_f[1], \ E_f[2], \ \cdots, \ E_f[D_fn_f]$ are i.i.d.
Bernoulli random variables, and so, the $\sum_{i=1}^{D_fn_f}E_f[i]$ is a
binomial random variable. Hence, the probability of a decoding error can
be computed exactly. However, the exact expression is combinatorial in
nature, and is not tractable for further analysis. We therefore proceed
by obtaining upper and lower bounds on the error probability, and show
that the bounds are the same up to a prefactor, and that the prefactor
decreases as the block size $D_fn_f$ increases. Hence, we pose the NUM
based on the upper bound on the error probability. Also, we relax the
following constraints: $n_f \in \mathbb{Z}_+$ and $k_f \in
\mathbb{Z}_+$, and allow them to take positive real values, i.e.,
$n_f \in \mathbb{R}_+$ and $k_f \in \mathbb{R}_+$.

\subsection{Upper and Lower Bounds}
\label{subsec:upper_and_lower_bounds}
\begin{lemma}[Upper Bound]
\label{lem:pe_bound}
The end--to--end probability $\widetilde{e}_f$ of a decoding error for
flow $f$ satisfies
\begin{align}
\label{eqn:pe_bound}
\widetilde{e}_f &\leq \ \exp\left(-D_f n_f I_{E_f[1]}\left(x_f;\theta_f\right)\right)\\
  &=:   \ e_f(\theta_f,n_f,r_f).\nonumber
\end{align}
where $x_f := \frac{1-r_f}{2}$, $r_f=k_f/n_f$ is the coding rate,
$\theta_f >0$ is the Chernoff--bound parameter and the function
$I_Z(x;\theta) := \theta x - \ln(\myexp{e^{\theta Z}})$ is called the
rate function in large deviations theory.
\end{lemma}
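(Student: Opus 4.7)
The plan is a textbook application of the one-sided Chernoff bound; the proof is essentially bookkeeping. Write $N := D_f n_f$ and $S_N := \sum_{i=1}^{N} E_f[i]$. Using $r_f = k_f/n_f$, the decoding-failure threshold rewrites as
\[
\frac{D_f n_f - D_f k_f}{2} \;=\; D_f n_f \cdot \frac{1-r_f}{2} \;=\; N x_f,
\]
so $\widetilde{e}_f = \myprob{S_N > N x_f}$. (The paper's relaxation of $n_f, k_f$ to positive reals means no floor is needed on the threshold.)

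Next, fix any $\theta_f > 0$ and apply exponential Markov: since $u \mapsto e^{\theta_f u}$ is strictly increasing,
\[
\myprob{S_N > N x_f} \;=\; \myprob{e^{\theta_f S_N} > e^{\theta_f N x_f}} \;\leq\; e^{-\theta_f N x_f}\, \myexp{e^{\theta_f S_N}}.
\]
The BSC-plus-interleaver model guarantees that $E_f[1],\ldots,E_f[N]$ are iid, so the moment generating function factorises as $\myexp{e^{\theta_f S_N}} = \bigl(\myexp{e^{\theta_f E_f[1]}}\bigr)^{N}$. Pulling everything into a single exponent,
\[
\myprob{S_N > N x_f} \;\leq\; \exp\!\bigl(-N\bigl[\theta_f x_f - \ln \myexp{e^{\theta_f E_f[1]}}\bigr]\bigr) \;=\; \exp\!\bigl(-N\, I_{E_f[1]}(x_f;\theta_f)\bigr),
\]
which is exactly \eqref{eqn:pe_bound} after substituting $N = D_f n_f$.

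There is essentially no real obstacle here. The bound holds for every $\theta_f > 0$, and the lemma leaves $\theta_f$ as a free parameter, so no optimisation over $\theta_f$ is performed inside the lemma — that is presumably deferred to where the bound is plugged into the NUM. The only small things worth checking are the algebraic identification of the threshold with $N x_f$ (immediate from the definition $x_f = (1-r_f)/2$) and the factorisation of the MGF (immediate from the independence assumption on the BSC error process). A remark about tightness — namely, that the bound is useful only when $x_f > \beta_f$, in which case one can further optimise the rate function over $\theta_f > 0$ — belongs to the companion lower-bound lemma rather than here.
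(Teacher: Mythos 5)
Your proof is correct and follows essentially the same route as the paper's: rewrite the decoding-failure threshold as $D_f n_f x_f$, apply the exponential Markov (Chernoff) inequality with free parameter $\theta_f>0$, factorise the moment generating function using the iid assumption on $E_f[1],\ldots,E_f[D_f n_f]$, and collect terms into the rate function $I_{E_f[1]}(x_f;\theta_f)$. Your remarks about the threshold algebra, the relaxation to real-valued $n_f,k_f$, and deferring the optimisation over $\theta_f$ are all consistent with how the paper handles these points.
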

\begin{proof}
See Appendix~\ref{app:proof_lemma_bound_pe}.
\end{proof}

\begin{lemma}[Lower Bound]
\label{lem:pe_lower_bound}
The end--to--end probability $\widetilde{e}_f$ of a decoding error for
flow $f$ satisfies
\begin{align}
\widetilde{e}_f &\ge  
\Gamma \exp\left(-{D_fn_f} I\left({\cal
 B}\left(x_f\right)\|{\cal B}(\beta_f)\right)\right)
\end{align}
where
$$
\Gamma=\frac{\beta_f}{1-\beta_f}  \exp\left(-{D_fn_f} H\left({\cal B}\left(x_f\right)\right)\right)
$$
and $x_f := \frac{1-r_f}{2}$, ${\cal B}(x)$ is the Bernoulli
distribution with parameter $x$, $H({\cal P})$ is the entropy of
probability mass function (pmf) ${\cal P}$, and $I({\cal P}\|{\cal Q})$
is the information divergence between the pmfs ${\cal P}$ and ${\cal
Q}$. 
\end{lemma}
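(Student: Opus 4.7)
The plan is to view $\widetilde{e}_f$ as the upper tail of a Binomial$(N,\beta_f)$ distribution with $N:=D_f n_f$, namely
\begin{equation*}
\widetilde{e}_f \;=\; \mathbb{P}\Bigl\{S_N > N x_f\Bigr\}, \qquad S_N := \sum_{i=1}^{N} E_f[i], \qquad x_f = \tfrac{1-r_f}{2},
\end{equation*}
and to produce a lower bound by keeping only a single pmf term from inside the tail and invoking the trivial estimate $\binom{N}{m}\ge 1$. This deliberately coarse step is precisely what generates the $\exp\bigl(-N H(\mathcal{B}(x_f))\bigr)$ factor in the prefactor $\Gamma$, so no Stirling approximation or change-of-measure argument is needed.

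Concretely, I would set $m := \lfloor N x_f \rfloor + 1$, i.e., the smallest integer strictly exceeding $N x_f$. Because $r_f \in (0,1]$ forces $x_f < 1/2$, one has $m \le N$ for every $N\ge 1$, so $\binom{N}{m}$ is well-defined and at least $1$. Dropping all but the $m$-th term in the binomial expansion then gives
\begin{equation*}
\widetilde{e}_f \;\ge\; \binom{N}{m}\beta_f^{m}(1-\beta_f)^{N-m} \;\ge\; \beta_f^{m}(1-\beta_f)^{N-m}.
\end{equation*}
Using $m \le N x_f + 1$ together with the fact that $\beta_f$ and $1-\beta_f$ both lie in $(0,1)$, monotonicity of the respective power functions yields
\begin{equation*}
\widetilde{e}_f \;\ge\; \beta_f^{N x_f + 1}(1-\beta_f)^{N(1-x_f) - 1} \;=\; \frac{\beta_f}{1-\beta_f}\,\beta_f^{N x_f}(1-\beta_f)^{N(1-x_f)}.
\end{equation*}

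The remaining step is a routine algebraic identification. Directly from the definitions of the binary KL divergence and entropy one has the identity
\begin{equation*}
x\ln\beta + (1-x)\ln(1-\beta) \;=\; -\,I\!\bigl(\mathcal{B}(x)\|\mathcal{B}(\beta)\bigr) \;-\; H\!\bigl(\mathcal{B}(x)\bigr),
\end{equation*}
so the previous display equals $\Gamma\,\exp\bigl(-N I(\mathcal{B}(x_f)\|\mathcal{B}(\beta_f))\bigr)$ with $\Gamma$ exactly as stated. The only real obstacle is integer-rounding bookkeeping: checking that $m = \lfloor N x_f\rfloor + 1$ simultaneously satisfies $m\le N x_f + 1$ (so the monotonicity inequalities point the right way) and $m\le N$ (so the binomial coefficient is legitimate). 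Both are immediate consequences of $x_f<1/2$, so the argument amounts to ``read off one term of the binomial pmf and tidy up.'' The reason the resulting prefactor $\Gamma$ decays exponentially in $N$, rather than only polynomially as in the sharp Cram\'er lower bound, is precisely that we used $\binom{N}{m}\ge 1$ instead of the tighter $\binom{N}{m}\gtrsim (N+1)^{-1}e^{N H(\mathcal{B}(x_f))}$.
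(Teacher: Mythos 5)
Your proposal is essentially the paper's own proof: both write $\widetilde{e}_f$ as a Binomial$(N,\beta_f)$ tail with $N=D_fn_f$, retain a single term just above $Nx_f$, discard the binomial coefficient via $\binom{N}{m}\ge 1$, and then identify $x_f\ln\beta_f+(1-x_f)\ln(1-\beta_f)$ with $-I(\mathcal{B}(x_f)\|\mathcal{B}(\beta_f))-H(\mathcal{B}(x_f))$; the only difference is that you track the integer index $m=\lfloor Nx_f\rfloor+1$ whereas the paper silently treats $Nx_f$ as an integer. One caveat on your rounding step: factor-by-factor monotonicity does not give $(1-\beta_f)^{N-m}\ge(1-\beta_f)^{N(1-x_f)-1}$, since $m\le Nx_f+1$ means $N-m\ge N(1-x_f)-1$, i.e.\ that exponent moves in the unfavourable direction; the displayed inequality nevertheless holds because $\beta_f^{m}(1-\beta_f)^{N-m}=\beta_f^{Nx_f+1}(1-\beta_f)^{N(1-x_f)-1}\bigl(\tfrac{1-\beta_f}{\beta_f}\bigr)^{\delta}$ with $\delta=Nx_f-\lfloor Nx_f\rfloor\in[0,1)$, and $\bigl(\tfrac{1-\beta_f}{\beta_f}\bigr)^{\delta}\ge 1$ in the operating regime $\beta_f<x_f<1/2$ assumed throughout the paper. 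With that one-line repair your argument is complete and coincides with the paper's.
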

\begin{proof}
See Appendix~\ref{app:proof_lemma_lower_bound_pe}.
\end{proof}

\subsection{Tightness of Bounds}
\label{subsec:tightness_of_bounds}
It can be verified that 
\begin{align*}
I_{E_f[1]}\left(x_f;\theta_f\right) &= \theta_f x_f - \ln\left(1-\beta_f+\beta_f e^{\theta_f}\right)
\end{align*}
Since $\theta_f>0$ is a free parameter, we can select the value that
maximises $I_{E_f[1]}\left(x_f;\theta_f\right)$ and so provides the
tightest upper bound. It can be verified (e.g. by inspection of the
second derivative) that $I_{E_f[1]}\left(x_f;\theta_f\right)$ is concave
in $\theta_f$ and so the KKT conditions are necessary and sufficient for
an optimum. The KKT condition here is
\begin{align*}
\frac{\partial I_{E_f[1]}\left(x_f;\theta_f\right)}{\partial\theta_f}
&=   x_f - \frac{\beta_fe^{\theta_f}}{1-\beta_f+\beta_fe^{\theta_f}}
=0
\end{align*}
which is solved by
\begin{align*}
 \theta_f^* & = \ln\left(\frac{x_f}{\beta_f}\right) -
 \ln\left(\frac{1-x_f}{1-\beta_f}\right). 
\end{align*}
provided $x_f>\beta_f$. Substituting for $\theta_f^*$,
\begin{align*}
\max_{\theta_f > 0} I_{E_f[1]}\left(x_f;\theta_f\right) &=I_{E_f[1]}\left(x_f;\theta_f^*\right)\\
&=x_f \ln\left(\frac{x_f}{\beta_f}\right) +(1-x_f) \ln\left(\frac{1-x_f}{1-\beta_f}\right)\\
&=I\left({\cal B}\left(x_f\right)\|{\cal B}(\beta_f)\right)
\end{align*}
and by Lemmas~\ref{lem:pe_bound} and \ref{lem:pe_lower_bound}, the
probability $\widetilde{e}_f$ of a decoding error satisfies \newline
\begin{align*}
 \Gamma e^{-{D_fn_f} I\left({\cal
 B}\left(x_f\right)\|{\cal B}(\beta_f)\right)}
\le \widetilde{e}_f  \le e^{-D_f n_f I\left({\cal
B}\left(x_f\right)\|{\cal B}(\beta_f)\right)}.
\end{align*}
It can be seen that the upper and lower bounds are the same to within
prefactor $\Gamma$, and  the gap between these bounds decreases
exponentially as the block size $D_f n_f$ increases.

\section{Network Utility Optimisation} 
\label{sec:network_utility_optimisation} 
We are interested in the fair allocation of flow airtimes and coding
rates amongst flows in the network. Other things being equal, we expect
that decreasing the coding rate $r_f$ (i.e., increasing the number 
$D_fn_f - D_fk_f$ of redundant symbols transmitted) for flow $f$ will
decrease the error probability $e_f$, and so increases the flow throughput.
However, decreasing the coding rate increases the coded packet size
$D_fn_f$, and so increases the airtime used by flow $f$. Since the network
capacity is limited and shared by other flows, this generally decreases
the airtime available to other flows and so decreases their throughout. 
Similarly, increasing the packet size $D_fk_f$ of flow $f$ increases its
throughput but at the cost of increased airtime and a reduction in the
throughput of other flows. We formulate this tradeoff as a utility fair 
optimisation problem. In particular, we focus on the proportional fair
allocation since it is of wide interest and, as we will see, is
tractable, despite the non--convex nature of the optimisation.

The utility fair optimisation problem is\newline
\begin{minipage}{\columnwidth}
\begin{align}
\max_{{\bm \theta},({\bm n},{\bm x})} \ \ \ & 
U\left({\bm \theta}, ({\bm n}, {\bm x})\right) \label{eq:P1}\\ 
\text{subject to} \quad
&\underset{f: c \in \C_f  }{\sum} \frac{n_f}{w_{f,c}}  \leq  T_c,  &&\forall c \in \C \label{eq:cons1} \\ 
 &\theta_f  >  0, &&\forall f \in \F                  \label{eq:cons2}     \\
 & x_f  \leq  \overline{\lambda}_f \,  &&\forall f \in \F  \label{eq:cons3a} \\
 & x_f  \geq  \underline{\lambda}_f \, &&\forall f \in \F  \label{eq:cons3b} 
\end{align}
\normalsize
\end{minipage}
\newline
with ${\bm \theta}:=[\theta_f]_{f \in \F}$ the vector of Chernoff
parameters, ${\bm n} := [n_f]_{f\in \F}$ the vector of flow packet
sizes, and ${\bm x} := [x_f]_{f \in \F}$ the vector of flow coding rates
(where we recall that $x_f = (1-r_f)/2$). Eqn.~\eqref{eq:cons1} enforces
the network capacity (or the flow schedulability) constraints,
Eqn.~\eqref{eq:cons2} the positivity constraint on the Chernoff
parameters, and the constraints Eqns.~\eqref{eq:cons3a}--\eqref{eq:cons3b} are
introduced for technical reasons that will be discussed in more detail
shortly.

For proportional fairness, we select the sum of the log of the flow
throughputs as our network utility $U$. For flow $f$ the expected throughput is
$D_fk_f(1-\widetilde{e}_f)$ symbols in every time interval of duration $D_f
T_{d(f)}$ (we recall that $d(f)$ is the destination cell of flow $f$),
which is the same as $k_f(1-\widetilde{e}_f)$ symbols every time interval of
duration $ T_{d(f)}$, where $D_f k_f$ is the information packet size and
$\widetilde{e}_f$ the packet decoding error probability. As the exact
expression of $\widetilde{e}_f$ is intractable, we use the
upper bound for $\widetilde{e}_f$ which is $e_f$. Thus, the objective
function is given by 
\begin{align*}
U\left({\bm \theta}, ({\bm n}, {\bm x})\right)
& :=  \sum_{f \in \F} \ln\left(k_f \left(1 - e_f(\theta_f,n_f,x_f)\right)\right)\\
& =  \sum_{f \in \F} \ln\left(n_f r_f \left(1 - e_f(\theta_f,n_f,x_f)\right)\right)\\
& =  \sum_{f \in \F} \ln\left(n_f (1-2x_f) \left(1 -
e_f(\theta_f,n_f,x_f)\right)\right)\\
&  =  \sum_{f \in \F} \ln\left(n_f\right) + 
       \sum_{f \in \F} \ln\left(1-2x_f\right) \\ & \quad +
       \sum_{f \in \F} \ln \left(1 - e_f(\theta_f,n_f,x_f)\right).
\end{align*}
The optimisation problem yields the proportional fair flow coding rates
and coded packet size $n_f$. Since the PHY transmission rates $w_{f,c}$
are known parameters, the coded packet size is proportional to the
airtime used by a flow (i.e., the airtime is given by $n_f/w_{f,c}$).

\subsection{Non--Convexity}
\label{subsec:non_convexity}
The objective function $U({\bm \theta},({\bm n},{\bm x}))$ is separable
in $(\theta_f,(n_f,x_f))$ for each flow $f$. However, it can be readily
verified that $\ln\left(1-e_f(\theta_f,n_f,x_f)\right)$ is not jointly
concave in $(\theta_f,(n_f,x_f))$, and so, the optimisation is
non--convex. Hence, the network utility maximisation problem defined in
Eqns.~\eqref{eq:P1}--\eqref{eq:cons3b} is not in the standard convex
optimisation framework.

\subsection{Reformulation as Sequential Optimisations}
 We proceed by making the following key observation.
\begin{lemma}
\label{lem:opt_joint_vs_sep}.
For convex sets ${\cal Y}$ and ${\cal Z}$, and for a function $f:{\cal
Y}\times{\cal Z}\to {\mathbb R}$ that is concave in $y \in {\cal Y}$ and
in $z \in {\cal Z}$, but not jointly in $(y,z)$, the solution to the joint
optimisation problem 
\newline
\begin{minipage}{\columnwidth}
\begin{align}
\label{eqn:joint_opt_prob}
\max_{y \in {\cal Y}, z \in {\cal Z}} f(y,z) 
\end{align}
\end{minipage}
is unique, and is the same as the solution to
\newline
\begin{minipage}{\columnwidth}
\begin{align}
\max_{z \in {\cal Z}} \max_{y \in {\cal Y}} f(y,z),\\\nonumber 
\end{align}
\end{minipage}
\newline
if $f(y^*(z),z)$ is a concave function of $z$, where for each $z \in
{\cal Z}$, $y^*(z) := \underset{y \in {\cal
	Y}}{\arg\max} f(y,z)$.
\end{lemma}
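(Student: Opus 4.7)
The proof rests on the elementary identity
\[
\max_{(y,z)\in\mathcal{Y}\times\mathcal{Z}} f(y,z) \;=\; \max_{z\in\mathcal{Z}}\Bigl(\max_{y\in\mathcal{Y}} f(y,z)\Bigr),
\]
which holds for any function on a product space and requires no concavity whatsoever. So the plan is not to prove that the two optimal \emph{values} coincide — this is automatic — but to show that the sequential scheme produces a well-defined, unique maximiser and that this maximiser actually solves the joint problem.

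The first step is to examine the inner maximisation. For each fixed $z\in\mathcal{Z}$, the hypothesis says $f(\cdot,z)$ is concave on the convex set $\mathcal{Y}$, so (under the mild strict-concavity assumption that is implicit in the uniqueness claim) its maximiser $y^{*}(z):=\arg\max_{y\in\mathcal{Y}} f(y,z)$ exists and is unique. This legitimises the definition of the value function $g(z):=f(y^{*}(z),z)=\max_{y\in\mathcal{Y}}f(y,z)$.

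The second step is to maximise $g$ over $z\in\mathcal{Z}$. By hypothesis $g$ is concave on the convex set $\mathcal{Z}$, and by the same implicit strict-concavity proviso there is a unique maximiser $z^{*}\in\arg\max_{z\in\mathcal{Z}} g(z)$. Set $y^{*}:=y^{*}(z^{*})$. By construction
\[
f(y^{*},z^{*})=g(z^{*})=\max_{z}\max_{y}f(y,z)=\max_{(y,z)}f(y,z),
\]
so $(y^{*},z^{*})$ attains the joint maximum.

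Finally, for uniqueness of the joint maximiser, suppose $(\tilde y,\tilde z)$ also attains it. Then $f(\tilde y,\tilde z)\le f(y^{*}(\tilde z),\tilde z)=g(\tilde z)$, and since the left-hand side equals the global maximum $g(z^{*})$, we obtain $g(\tilde z)\ge g(z^{*})$, forcing $\tilde z=z^{*}$ by uniqueness of the $g$-maximiser. Given $\tilde z=z^{*}$, the equality $f(\tilde y,z^{*})=f(y^{*},z^{*})$ and the uniqueness of the inner maximiser $y^{*}(z^{*})$ yield $\tilde y=y^{*}$. The main obstacle — and the only subtlety — is this appeal to strict rather than merely weak concavity: the lemma as stated claims uniqueness, which in the intended application for $U(\bm\theta,(\bm n,\bm x))$ is guaranteed because of the strictly concave $\ln$ terms in the objective and the strict concavity of $I_{E_f[1]}(x_f;\theta_f)$ in $\theta_f$. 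I would therefore either state the lemma with strict concavity hypotheses, or explicitly note that ``concave'' in the hypothesis is to be read as ``strictly concave'' wherever uniqueness is invoked.
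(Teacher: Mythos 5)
Your argument follows essentially the same route as the paper's own proof: fix $z$, maximise over $y$ to get $y^{*}(z)$ and the value function $g(z)=f(y^{*}(z),z)$, maximise $g$ over $z$, and conclude that the resulting point dominates every $(y,z)$ because $f(y,z)\le g(z)\le g(z^{*})$. Where you go beyond the paper is on the uniqueness claim: the paper's appendix proof only establishes that the sequentially constructed point is \emph{a} global maximiser, and it asserts uniqueness of the inner maximiser $y^{+}(z)$ and of $z^{+}$ directly from concavity, which by itself guarantees neither existence nor uniqueness (a merely concave function can have a whole set of maximisers, or none without compactness/coercivity). Your explicit argument that any other joint maximiser $(\tilde y,\tilde z)$ must satisfy $g(\tilde z)\ge g(z^{*})$, hence $\tilde z=z^{*}$, and then $\tilde y=y^{*}(z^{*})$, supplies the step the paper omits, and your caveat that ``concave'' must be read as ``strictly concave'' (with the maxima attained) for the uniqueness statement to hold is correct and worth making explicit --- in the intended application the strict concavity shown in Lemma~\ref{lemma:convexity_of_log_tx} and in the $\theta_f$ optimisation provides exactly this. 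So your proof is correct under that reading, and is best viewed as a tightened version of the paper's argument rather than a different one.
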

\begin{proof}
See Appendix~\ref{app:proof_lemma_opt_joint_vs_sep}.
\end{proof}

This lemma establishes conditions under which we can transform a
non--convex optimisation into a sequence of convex optimisations.
Roughly speaking, we proceed by optimising over each variable in turn
and substituting the optimal variable value that is found back into the
objective function. This creates a sequence of objective functions.
Provided each member of this sequence is concave in the variable being
optimised (but not necessary jointly concave in all variables), the
solution to the sequence of convex optimisations coincides with the
solution to the original non--convex optimisation.  Evidently, the
condition that concavity holds for every objective function in this
sequence is extremely strong. Remarkably, however, we show that it is
satisfied in our present network utility optimisation.

\subsection{Optimal ${\bm \theta}^*(x_f)$}
\label{subsec:opt_theta}
Taking a sequential optimisation approach, we begin by first solving the optimisation 
\begin{align*}
\max_{{\bm \theta}} \ \ \ & 
U\left({\bm \theta}, ({\bm n}, {\bm x})\right) \nn
\text{subject to} \quad
 &\theta_f  >  0, &&\forall f \in \F                    
\end{align*}
given packet sizes ${\bm n} \in \mathbb{Z}_+^F$ and coding rates ${\bm
x}\in [\underline{\lambda}_f,\overline{\lambda}_f]^F$.   The objective
function is separable and concave in the $\theta_f$s.  The partial
derivative of $U\left({\bm \theta}, ({\bm n}, {\bm x})\right)$ with
respect to $\theta_f$ is given by 
\begin{align}\label{eq:deriv}
\frac{\partial U\left({\bm \theta}, ({\bm n}, {\bm x})\right)}{\partial\theta_f}
&= \frac{e_fD_f n_f}{1-e_f} \left[ x_f - \frac{\beta_f e^{\theta_f}}{1-\beta_f+\beta_fe^{\theta_f}}\right]
\end{align}
Setting this derivative equal to zero, provided $x_f> \beta_f$ this is
solved by
\begin{align}
\frac{\beta_f e^{\theta_f^*}}{1-\beta_f+\beta_fe^{\theta_f^*}} &=  x_f \nn
\text{or},  e^{\theta_f^*} &= \frac{x_f}{\beta_f}\frac{1-\beta_f}{1-x_f} \nn
\text{or},       \theta_f^*&=  \ln\left(\frac{x_f}{\beta_f}\right) -
 \ln\left(\frac{1-x_f}{1-\beta_f}\right). 
 \label{eq:opt_theta_star}
\end{align}
Observe that in fact $\theta_f^*$ is function only of $x_f$ and not both
$n_f$ and $x_f$. The requirement  for $x_f> \beta_f$ ensures that
$\theta_f^*>0$. When $x_f\le \beta_f$, the derivative
(Eqn.~\eqref{eq:deriv})
is negative for all $\theta_f >0$. In this case, the optimum
$\theta_f^*$ is zero which yields an error probability $e_f$ of one.
Thus, for error recovery we require $x_f> \beta_f$ i.e. the coding rate
$r_f < 1-2\beta_f$,  and for a non--empty feasible
region in the NUM problem formulation in
Eqns.~\eqref{eq:P1}--\eqref{eq:cons3b}  the constraints on $x_f$ should satisfy the following:
$\overline{\lambda}_f < 1-2\beta_f$ and $\underline{\lambda}_f > 0$. We
note that the capacity region for a BSC having a cross--over probability
$\alpha_f$ with an $m$--ary signalling is $(0, m(1-H(\alpha_f)))$, and
the coding rate $1-2\beta_f$ lies in the capacity region.

\subsection{Optimal $(n_f^*,x_f^*)$}
\label{sec:convexity_condition}
The next step in our sequential optimisation approach is to solve
\begin{align*}
\max_{({\bm n},{\bm x})} \ \ \ & 
U\left( {\bm \theta^*}({\bm x}), ({\bm n}, {\bm x})\right) \nn
\text{subject to} \quad
&\underset{f: c \in \C_f  }{\sum} \frac{n_f}{w_{f,c}}  \leq  T_c,  &&\forall c \in \C  \\ 
 & x_f  \leq  \overline{\lambda}_f \,  &&\forall f \in \F   \\
 & x_f  \geq  \underline{\lambda}_f \, &&\forall f \in \F  
\end{align*}
That is, we substitute into the objective function for the optimal
$\theta_f^*$ found in Section \ref{subsec:opt_theta}. Defining $I_f =
x_f \ln\left(\frac{x_f}{\beta_f}\right) +(1-x_f)
\ln\left(\frac{1-x_f}{1-\beta_f}\right)$,
\begin{align*}
U\left( {\bm \theta^*}({\bm x}), ({\bm n}, {\bm x})\right)
&=\sum_{f \in \F} \ln\left(n_f\right) + 
       \sum_{f \in \F} \ln\left(1-2x_f\right) \\ & \quad +
       \sum_{f \in \F} \ln \left(1 - e^{-D_f n_f I_f}\right).
\end{align*}
It can be verified that $U\left( {\bm \theta^*}({\bm x}), ({\bm n}, 
{\bm x})\right)$ is not jointly concave in $({\bm n}, {\bm x})$. To
proceed, we therefore rewrite the objective in terms of the
log--transformed variables $\widetilde{n}_f = \ln(n_f)$ and
$\widetilde{I}_f =\ln(I_f)$. Observe that the mapping from $n_f$ to
$\widetilde{n}_f $ is invertible and similarly the mapping from $x_f$ to
$\widetilde{I}_f$. Since $\widetilde{I}_f$ is a monotone increasing
function of $x_f$ (this can be verified by inspection of the first
derivative), the inverse mapping from $\widetilde{I}_f$ to $x_f$ exists
and is one-to-one. With the obvious abuse of notation, we write inverse
map as $x_f(\widetilde{I}_f)$. In terms of these log--transformed
co-ordinates, the objective function is
$U\left( {\bm \theta^*}({\bm \tilde{I}}), ({\bm \tilde{n}}, {\bm
\tilde{I}})\right)$. We note that the problem defined in 
Eqns.~\eqref{eq:P1}--\eqref{eq:cons3b} is equivalent to the problem, 
\newline
\begin{minipage}{\columnwidth}
\begin{align}
\max_{{\bm \theta},(\tilde{\bm n},\tilde{\bm I})} \ \ \ & 
U\left( {\bm \theta}, ({\tilde{\bm n}}, 
{\tilde{\bm I}})\right) \nn
\text{subject to} \quad
&\underset{f: c \in \C_f  }{\sum} \frac{e^{\tilde{n}_f}}{w_{f,c}}  \leq
T_c,  &&\forall c \in \C \label{eq:consb1} \\ 
 &\theta_f  >  0, &&\forall f \in \F                  \label{eq:consb2}     \\
 & \tilde{I}_f  \leq  \overline{\tilde\lambda}_f \,  &&\forall f \in \F
\label{eq:consb3a} \\
 &\tilde{I}_f  \geq  \underline{\tilde\lambda}_f \, &&\forall f \in \F
\label{eq:consb3b} 
\end{align}
\normalsize
\end{minipage}
\newline
and hence, by Lemma~\ref{lem:opt_joint_vs_sep}, the solution to the
log--transformed problem is the same as that of the problem defined in
Eqns.~\eqref{eq:P1}--\eqref{eq:cons3b}. We solve the maximisation
problem by convex optimisation method. We show that the objective
function is jointly concave in $\left({\bm \tilde{n}}, {\bm
\tilde{I}}\right)$ in the following Lemma.
\begin{lemma}
\label{lemma:convexity_of_log_tx}
\begin{align*}
U\left( {\bm \theta^*}({\bm \tilde{I}}), ({\bm \tilde{n}}, {\bm \tilde{I}})\right)
&=\sum_{f \in \F} \tilde{n}_f + 
       \sum_{f \in \F} \ln\left(1-2x_f(\widetilde{I}_f)\right) \\ & \quad +
       \sum_{f \in \F} \ln \left(1 - e^{-D_f e^{\tilde{n}_f + \tilde{I}_f }}\right).
\end{align*}
is jointly concave in $\widetilde{n}_f$ and $\widetilde{I}_f$. 
\end{lemma}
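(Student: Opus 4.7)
The plan is to split the objective as $U = U_A + U_B + U_C$ with $U_A := \sum_{f \in \F} \tilde{n}_f$, $U_B := \sum_{f \in \F} \ln(1 - 2 x_f(\tilde{I}_f))$, and $U_C := \sum_{f \in \F} \ln(1 - e^{-D_f e^{\tilde{n}_f + \tilde{I}_f}})$, and to establish concavity of each piece separately. Since the summands for different flows involve disjoint coordinates $(\tilde{n}_f, \tilde{I}_f)$, joint concavity of each sum reduces to concavity of a single per-flow summand, and $U_A$ is linear and hence trivially concave.

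For $U_C$, each summand depends on $(\tilde{n}_f, \tilde{I}_f)$ only through the linear combination $u := \tilde{n}_f + \tilde{I}_f$, so it suffices to show $g(u) := \ln(1 - e^{-D_f e^u})$ is concave in $u$. With $y := D_f e^u$ and $dy/du = y$, direct differentiation gives $g'(u) = y/(e^y - 1)$ and
\[
g''(u) = \frac{y\bigl[(e^y-1) - y e^y\bigr]}{(e^y-1)^2},
\]
which is nonpositive for $y \ge 0$ by the elementary inequality $y e^y \ge e^y - 1$ (equality at $y=0$, and the difference has a nonnegative derivative thereafter). Composition with the linear map $u = \tilde{n}_f + \tilde{I}_f$ preserves concavity.

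The technical heart lies in $U_B$: showing that $\psi(t) := \ln(1 - 2 x_f(t))$ is concave in $t := \tilde{I}_f$. Writing $x = x_f(t)$ with $e^t = I_f(x)$, one has $dx/dt = I_f(x)/I_f'(x)$, and on $x \in (\beta_f, 1/2)$ both $I_f(x)$ and $I_f'(x)$ are strictly positive. The chain rule then yields
\[
\psi'(t) = -\frac{2\,I_f(x)}{(1-2x)\,I_f'(x)},
\]
and a second application reduces the inequality $\psi''(t) \le 0$ to the one-variable condition
\[
(1-2x)\bigl[I_f'(x)^2 - I_f(x) I_f''(x)\bigr] + 2\, I_f(x) I_f'(x) \ge 0 \qquad \forall x \in (\beta_f, 1/2).
\]
The second summand is strictly positive on this interval, and the bracketed factor equals $-I_f(x)^2 (\ln I_f)''(x)$; thus the inequality follows once one verifies that $I_f$ is log-concave on $(\beta_f, 1/2)$. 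Using the closed forms $I_f'(x) = \ln[x(1-\beta_f)/(\beta_f(1-x))]$ and $I_f''(x) = 1/[x(1-x)]$, log-concavity reduces to the scalar inequality $(I_f'(x))^2 \ge I_f(x)/[x(1-x)]$, which I would verify by combining the asymptotic behaviour as $x \to \beta_f^+$ (where $I_f(x) \sim (x-\beta_f)^2/[2\beta_f(1-\beta_f)]$ and $I_f'(x) \sim (x-\beta_f)/[\beta_f(1-\beta_f)]$ yield the inequality with a factor-of-two margin) and at $x = 1/2$ with a direct monotonicity check on the interior. This establishes concavity of $\psi$ and completes the argument.
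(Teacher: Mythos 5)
Your decomposition is sound, and two of the three pieces are handled completely: $U_A$ is linear, and your single-variable argument for $U_C$ (reduce to $g(u)=\ln(1-e^{-D_f e^u})$ with $u=\tilde n_f+\tilde I_f$, then $g''\le 0$ via $ye^y\ge e^y-1$) is correct and in fact cleaner than the paper's route, which computes the full $2\times 2$ Hessian of $\ln(1-e^{-D_f e^{x+y}})$ and exploits its rank-one structure. Your treatment of $U_B$ also follows the right track: the chain-rule computation with $e^t=I_f(x)$ is correct, and your sufficient condition is essentially the same scalar inequality that drives the paper's proof, since with $\theta_f^*=I_f'$ and $I_f''(x)=1/[x(1-x)]$ the log-concavity statement $(I_f')^2\ge I_f I_f''$ is exactly the paper's inequality $x(1-x)\theta_f^*(x)^2\ge I_f(x)$ (the paper phrases it as convexity of $x_f(\tilde I_f)$, which then gives concavity of $\ln(1-2x_f(\tilde I_f))$ by composition).

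The gap is that you never actually prove this key inequality. "Asymptotics as $x\to\beta_f^+$, a check at $x=1/2$, and a direct monotonicity check on the interior" is a plan, not a proof: you do not say which quantity is monotone or why, and endpoint verification cannot control the interior without that. (The natural candidate, the ratio $x(1-x)(I_f')^2/I_f$, is close to $2$ at both ends, so its monotonicity is not evident.) The fix is short and is exactly the paper's closing step: define $g(x):=x(1-x)I_f'(x)^2-I_f(x)$ on $(\beta_f,1/2)$. Then $g(\beta_f)=0$, and using $I_f''(x)=1/[x(1-x)]$,
\begin{align*}
g'(x)=(1-2x)I_f'(x)^2+I_f'(x)>0 \quad\text{for } x\in(\beta_f,1/2),
\end{align*}
since $I_f'(x)=\ln\frac{x(1-\beta_f)}{\beta_f(1-x)}>0$ there. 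Hence $g>0$, which is the log-concavity you need; with that inserted, your argument for $U_B$, and thus the whole proof, goes through.
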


\begin{proof}
See Appendix~\ref{app:proof_lemma_convexity_of_log_tx}.
\end{proof}

Hence, we have the following convex optimisation problem
\begin{align}
\label{eqn:max_x_problem}
\max_{({\bm \tilde{n}},{\bm \tilde{I}})} \ \ \ & 
U\left( {\bm \theta^*}({\bm \tilde{I}}), ({\bm \tilde{n}}, {\bm
\tilde{I}})\right) \\
\text{subject to} \quad
&\underset{f: c \in \C_f  }{\sum} \frac{e^{\widetilde{n}_f}}{w_{f,c}}
\leq  T_c,  &&\forall c \in \C\label{eq:log_const0}\\
&\tilde{I}_f  \leq  \overline{\tilde\lambda}_f \,  &&\forall f \in \F
\label{eq:log_const1}\\
&\tilde{I}_f  \geq  \underline{\tilde\lambda}_f \, &&\forall f \in \F
\label{eq:log_const2}
\end{align}
We solve the above maximisation problem using the Lagrangian relaxation
approach. The Lagrangian function of the problem is given by 
\begin{align*}
& L\left(\widetilde{\bm n}, \widetilde{\bm I}, \bm{p},
{\underline{\bm \nu}}, {\overline{\bm \nu}}\right) \nn
&:= \underset{f\in\F}{\sum}
U_f(\theta_f^*(\widetilde{I}_f),(\widetilde{n}_f,\widetilde{I}_f)) 
- \underset{c\in\C}{\sum}p_c\left(
\underset{f \in \F_c}{\sum}\frac{e^{{\widetilde n}_f}}{w_{f,c}} - T_c
\right)\nn
& - \underset{f\in\F}{\sum}\underline{\nu}_f\left( 
\tilde{I}_f -  \overline{\tilde\lambda}_f 
\right)
  + \underset{f\in\F}{\sum}\overline{\nu}_f \left( 
\tilde{I}_f -  \underline{\tilde\lambda}_f 
\right)
\end{align*}
where $\bm{p} \geq {\bm 0}$, ${\underline{\bm \nu}} \geq {\bm 0}$, and
${\overline{\bm \nu}} \geq {\bm 0}$ are
Lagrangian multipliers corresponding to the constraints given in  
Eqns.~\eqref{eq:log_const0}--\eqref{eq:log_const2}. The channel error
probabilities $\beta_f$s are strictly positive, and the channel coding
rates are always assumed to be in the interior of the feasibility region. 
Hence, the constraints for the channel coding rate given in
Eqns.~\eqref{eq:log_const1}, \eqref{eq:log_const2} are not active at
the optimal point, and the Lagrangian costs $\underline{\nu}_f$s and
${\overline{\nu}_f}$s are zero. Thus, the shadow costs
corresponding to these constraints will not appear in the Lagrangian
relaxation. 

Since the optimisation problem falls within convex optimisation framework, and
the Slater condition is satisfied, strong duality holds. Hence, 
the KKT conditions are necessary and sufficient for optimality.
Differentiating the Lagrangian with respect to 
$\widetilde{n}_f$
at 
$\widetilde{n}_f =
\widetilde{n}_f^*$,
and setting equal to zero yields the KKT condition 
\begin{align}
\label{eqn:kkt-n}
1 
 + \frac{D_f 
 { \exp(\widetilde{n}_f^* + \widetilde{I}_f^*)}
 e^{-D_f \exp(\widetilde{n}_f^* + \widetilde{I}_f^*)}
 }{1-
 e^{-D_f \exp(\widetilde{n}_f^* + \widetilde{I}_f^*)}
 }
 &= \sum_{c\in{\cal C}_f} \frac{p_c e^{\widetilde{n}_f^*}}{w_{f,c}}\nn 
\text{or}, \ \  1 
 + \frac{D_f n_f^* I_f(x_f^*) e^{-D_f n_f^*I_f(x_f^*)}}{1-e^{-D_f n_f^*
 I_f(x_f^*)}}
 &= \sum_{c\in{\cal C}_f} \frac{p_c n_f^*
 }{w_{f,c}}
 \end{align}
Similarly, the KKT condition for $\widetilde{I}_f^*$ is
 \begin{align}
\label{eqn:kkt-r}
\frac{D_f 
 { \exp(\widetilde{n}_f^* + \widetilde{I}_f^*)}
 e^{-D_f \exp(\widetilde{n}_f^* + \widetilde{I}_f^*)}
 }{1-
 e^{-D_f \exp(\widetilde{n}_f^* + \widetilde{I}_f^*)}
 }
 &= \frac{2}{1-2x_f^*} \frac{I_f^*}{\theta_f^*}\nn
\text{or}, \ \ \
  \frac{D_f n_f^* I_f(x_f^*) e^{-D_f n_f^*I_f(x_f^*)}}{1-e^{-D_f n_f^*
  I_f(x_f^*)}}
 &= \frac{2}{1-2x_f^*} \frac{I_f(x_f^*)}{\theta_f^*(x_f^*)}
 \end{align}

Combining eqns.~\eqref{eqn:kkt-n} and \eqref{eqn:kkt-r}, yields
 \begin{align}\label{eq:primalsoln}
\sum_{c\in{\cal C}_f} \frac{p_c n_f^*}{w_{f,c}} - 1 
&=   \frac{2}{1-2x_f^*} \frac{I_f(x_f^*)}{\theta_f^*(x_f^*)}.
\end{align}
Observe that the LHS is a function of $n_f^*$ and the RHS is a function of $x_f^*$.
Thus, the choice of packet size parameter $n_f^*$ and
coding rate parameter $x_f^*$ are in general coupled. 

\subsection{Distributed Algorithm for Solving Optimisation}
Given the values of the Lagrange multipliers ${\bm p}^*$, the solution
to Eqn.~\eqref{eq:primalsoln} specifies the optimal packet size and
coding rate. To complete the solution to the optimisation it therefore
remains to calculate the multipliers ${\bm p}^*$. These cannot be
obtained in closed form since their values reflect the network topology
and details of flow routing. However, they can be readily found in a
distributed manner using a standard subgradient approach.

We proceed as follows. The dual problem for the primal problem defined in 
Eqn.~\eqref{eqn:max_x_problem} is given by
\begin{eqnarray*}
\min_{{\bm p} \geq 0} & & D(\bm{p}),
\end{eqnarray*}
where the dual function $D(\bm{p})$ is given by
\begin{align}
D(\bm{p})
&= \max_{(\widetilde{\bm n},\widetilde{\bm I})} \ 
\sum_{f \in \F} \ U_f(\theta_f^*(\widetilde{I}_f),(\widetilde
n_f,\widetilde I_f)) \nn
&\quad + \underset{c\in\C}{\sum}p_c\left( T_c - \underset{f \in
\F_c}{\sum} \frac{e^{\widetilde n_f}}{w_{f,c}} \right) \label{eq:supremum}\\
&= \underset{f\in\F}{\sum} \ 
U_f\left( \theta_f^*(\widetilde I_f({\bm p})), (\widetilde n_f^*({\bm p}),
\widetilde I_f^*({\bm p})) \right)\nn
&\quad +\underset{c\in\C}{\sum}p_c\left( T_c - \underset{f \in \F_c}{\sum}
\frac{e^{\widetilde n_f^*({\bm p})}}{w_{f,c}} 
\right).\nonumber
 \label{eq:d_star}
\end{align}
\normalsize

\noindent
From Eqn.~\eqref{eq:supremum}, for any $(\widetilde{\bm
n},\widetilde{\bm I})$, 
\begin{align*}
D(\bm{p}) &\geq \underset{f\in\F}{\sum} U_f(\theta_f^*(\widetilde
I_f),(\widetilde n_f,\widetilde I_f)) +\underset{c\in\C}{\sum}p_c\left( T_c - \underset{f \in \F_c}{\sum}
\frac{e^{\widetilde n_f}}{w_{f,c}}
\right),
\end{align*}
\normalsize
and in particular, the dual function  $D({\bm p})$ is greater than that
for $\widetilde{I}_f= \widetilde{I}_f^*({\widetilde{\bm p}})$ for some arbitrary $\widetilde{\bm p}$, i.e.,
\begin{align}
D(\bm{p})
&\geq \underset{f\in\F}{\sum} \ U_f\left(
\theta_f^*(\widetilde I_f^*({\widetilde{\bm p}})), 
(
\widetilde n_f^*({\widetilde{\bm p}}), 
\widetilde I_f^*({\widetilde{\bm p}}) 
)
\right)\nn 
&\quad+\underset{c\in\C}{\sum}p_c\left( T_c - \underset{f \in \F_c}{\sum}
\frac{e^{\widetilde n_f^*({\widetilde{\bm p}})}}{w_{f,c}}
\right)\nn
&= D(\widetilde{\bm p})
+\underset{c\in\C}{\sum}\left(p_c-\widetilde{p}_c\right)\left( T_c - \underset{f \in \F_c}{\sum}
\frac{e^{\widetilde n_f^*({\widetilde{\bm p}})}}{w_{f,c}}
\right)
\end{align}
\normalsize
Thus, a sub--gradient of $D(\cdot)$ at any $\widetilde{\bm p}$ is given
by the vector 
\begin{align}
\left[ T_c - \underset{f \in \F_c}{\sum}
\frac{n_f^*({\widetilde{\bm p}})}{w_{f,c}}\right]_{c \in
\C},\nonumber
\end{align}
and the projected subgradient descent update is
\begin{eqnarray*}
p_c(i+1) = \left[p_c(i)-\gamma\cdot
\left(T_c - \underset{f \in \F_c}{\sum}
\frac{n_f^*({\bm p}(i))}{w_{f,c}} 
\right)\right]^+
\end{eqnarray*}
where $\gamma > 0$ is a sufficiently small stepsize, and $[f(x)]^+ := \max\{f(x),0\}$
ensures that the Lagrange multiplier
never goes negative (see \cite{nedich}).

The subgradient updates can carried out locally by each cell $c$ since
the update of $p_c$ only requires knowledge of the packet sizes
$n_f^*({\bm p}(i))$ of flows $f \in \F_c$ traversing cell $c$. Thus, at
the beginning of each iteration $i$, the flow source nodes choose their
packet sizes as $D_f n_f^*({\bm p}(i))$ and the coding rates as
$1-2x_f^*({\bm p}(i))$, and each cell computes its cost based on the
packet sizes (or equivalently the rates) of flows through it. The
updated costs along the route of each flow are then fed back to the
source nodes to compute the packet size and coding rate for the next
iteration.

Observe that the Lagrange multiplier $p_c$ can be interpreted as the
cost of transmitting traffic through cell $c$. The amount of service
time that is available is given by $\Delta = T_c - \underset{f \in
\F_c}{\sum} \frac{n_f^*({\bm p}(i))}{w_{f,c}}$. When $\Delta$ is
positive and large, then the Lagrangian cost $p_c$ decreases rapidly
(because the dual function $D(\cdot)$ is convex), and when $\Delta$ is negative, then the
Lagrangian cost $p_c$ increases rapidly to make $\Delta \geq 0$. We note
that the increase or decrease of $p_c$ between successive iterations is
proportional to $\Delta$, the amount of service time available. Thus,
the sub--gradient procedure provides a dynamic control scheme to balance
the network load.

The resulting distributed implementation of the joint airtime/coding
rate optimisation task is summarised in Algorithm \ref{algo:one}.
\begin{algorithm}
\begin{algorithmic}
\STATE \emph{Each cell $c$ runs:}
\LOOP
\STATE 1. $p_c(i+1) = \left[p_c(i)-\gamma\cdot \left(T_c - \underset{f
\in \F_c}{\sum} \frac{n_f^*({\bm p}(i))}{w_{f,c}} \right)\right]^+$
\ENDLOOP
\end{algorithmic}
\begin{algorithmic}
\STATE \emph{The source for each flow $f$ runs:}
\LOOP
\STATE 1. Measure $\sum_{c\in{\cal C}_f} \frac{p_c}{w_{f,c}}$, the
aggregate cost of using the cells along the route of flow $f$.  E.g. if
each cell updates the header of transmitted packets to reflect this sum,
it can then be echoed back to the source by the flow destination. 
\STATE 2. Find the unique packet size $n_f^*$ and coding rate $x_f^*$
that solve (\ref{eq:primalsoln}). Since there are only two variables,
a simple numerical search can be used.
\ENDLOOP
\end{algorithmic}
\caption{Distributed implementation of joint airtime/coding rate
optimisation.}\label{algo:one}
\end{algorithm}

\section{Two Special Cases}
\label{sec:two_special_cases}

\subsection{Delay--Insensitive Networks}
Suppose the delay deadline $D_f\rightarrow \infty$ for all flows. For
any positive bounded $n_f^*I_f^*$, i.e., $0 < n_f^*I_f^* < \infty$, the
LHS of Eqn.~\eqref{eqn:kkt-r} can be written as   
\begin{align}
\frac{D_f n_f^*I_f^* e^{-D_f n_f^*I_f^*} }{1- e^{-D_f n_f^*I_f^*}}
&= \frac{D_f n_f^*I_f^*  }{e^{D_f n_f^*I_f^*}-1} \nn
&= \frac{D_f n_f^*I_f^*  }{\sum_{j = 1}^\infty \frac{(D_f n_f^*I_f^*)^j}{j!} } \nn
&= \frac{1 }{1+\sum_{j = 2}^\infty \frac{(D_f n_f^*I_f^*)^{j-1}}{j!} } \nn
\therefore \lim_{D_f \to \infty}\frac{D_f n_f^*I_f^* e^{-D_f n_f^*I_f^*} }{1-
e^{-D_f n_f^*I_f^*}} &= 0.
\end{align}
Thus, the asymptotic optimal coding rate $x_f^*$ as the delay deadline
requirement $D_f \to \infty$ is the solution to
\begin{align}\label{eq:delayx}
\frac{2}{1-2x_f^*}\frac{I_f(x_f^*)}{\theta^*(x_f^*)} = 0.
\end{align}
Since $\beta_f < x_f < 1/2$, it is sufficient to find the solution to 
\begin{align}\label{eq:delayx-new}
\frac{I_f(x_f^*)}{\theta^*(x_f^*)} &= 0. \nn
\text{Note that} \ 
\lim_{x_f^* \to \beta_f} 
\frac{I_f(x_f^*)}{\theta^*(x_f^*)} &= 0, \nn
\text{and hence}, \ 
\lim_{D_f \to \infty} x_f^* &= \beta_f.
\end{align}
Since this is the limiting solution and $x_f^* > \beta_f$, one can use 
$x_f^* = \beta_f + \epsilon$ for some arbitrarily small $\epsilon > 0$. 
Similarly, from Eqn.~\eqref{eqn:kkt-n}, the asymptotic optimal packet 
size $n_f^*$ as $D_f \to \infty$ is
\begin{align}\label{eq:delayn}
n_f^*=\frac{1}{\sum_{c\in C_f}p_c/w_{f,c}}
\end{align}
where the multipliers $p_c$ are obtained, as before, by subgradient
descent
\begin{eqnarray}\label{eq:delayp}
p_c(i+1) = \left[p_c(i)-\gamma\cdot
\left(T_c - \underset{f \in \F_c}{\sum}
\frac{n_f^*({\bm p}(i))}{w_{f,c}}
\right)\right]^+
\end{eqnarray}
Observe that the optimal coding rate $x_f^* = \beta_f + \epsilon$ which
is given by the solution of Eqn.~\eqref{eq:delayx-new} is determined
solely by the channel error rate $\beta_f$ of flow $f$. It is therefore
completely independent of the other network properties. In particular,
it is independent of the packet size $n_f^*$ used, of the other flows
sharing the network and of the network topology. Conversely, observe
that the optimal packet size $n_f^*$ in Eqn.~\eqref{eq:delayn} and
Eqn.~\eqref{eq:delayp} is dependent on the network topology and flow
routes, but is completely independent of the error rate $\beta_f$ and
coding rate $x_f$. That is, in delay--insensitive networks, the joint
airtime/coding rate optimisation task breaks into separate optimal
airtime allocation and optimal coding rate allocation tasks which are
completely decoupled. Our optimisation therefore yields a MAC/PHY
layering, whereby airtime allocation/transmission scheduling is handled
by the MAC whereas coding rate selection is handled by the PHY, with no
cross-layer communication. It is important to note, however, that this
layering does not occur in networks where one or more flows have finite
delay--deadlines; see Section~\ref{sec:discussion} for a more detailed
discussion.

\subsection{Loss-Free Networks}\label{sec:lossfree}
Suppose the channel symbol error rate $\beta_f = 0$ for all flows. 
From Eqn.~\eqref{eq:opt_theta_star}, we observe that 
\begin{align}
\lim_{\beta_f \to 0} \theta_f^* &= \infty,
\end{align}
and this yields $e_f=0$ for all flows. The objective function in 
Eqn.~\eqref{eqn:max_x_problem} degenerates to $\sum_{f \in \F}
\ln(n_f(1-2x_f))$. We note that for any $x^*_f>\beta_f > 0$, 
as $\beta_f \downarrow 0$, $I_f(x_f^*) \to \infty$. Hence, the LHS of 
Eqn.~\eqref{eqn:kkt-r} becomes,
\begin{align}
\lim_{\beta_f \downarrow 0}  
  \frac{D_f n_f^* I_f(x_f^*) e^{-D_f n_f^*I_f(x_f^*)}}{1-e^{-D_f n_f^*
  I_f(x_f^*)}}
&= 0. 
\end{align}
In the same was as in Eqn.~\eqref{eq:delayx-new}, this limit can be achieved by $x_f^*=0$ (i.e. $r_f^*=1$). Similarly, 
the optimal packet size is 
$n_f^*=\frac{1}{\sum_{c\in \C_f}p_c/w_{f,c}}$.
This optimal packet size is identical to that for delay--insensitive
networks, see Eqn.~\eqref{eq:delayn}, and it can be verified that in fact it
corresponds to the classical proportional fair rate allocation for loss--free networks, as expected.

\section{Examples}
\label{sec:discussion}

\subsection{Single cell}\label{subsec:x_f_s}
We begin by considering network examples consisting of a single cell
carrying multiple flows.   The network topology is illustrated
schematically in Figure \ref{fig:single_cell} and might correspond, for
example, to a WLAN.   

\begin{figure}
\centering
\subfigure[Optimal airtime allocation vs delay deadline $D$, N=2.]{
\includegraphics[width=0.8\columnwidth]{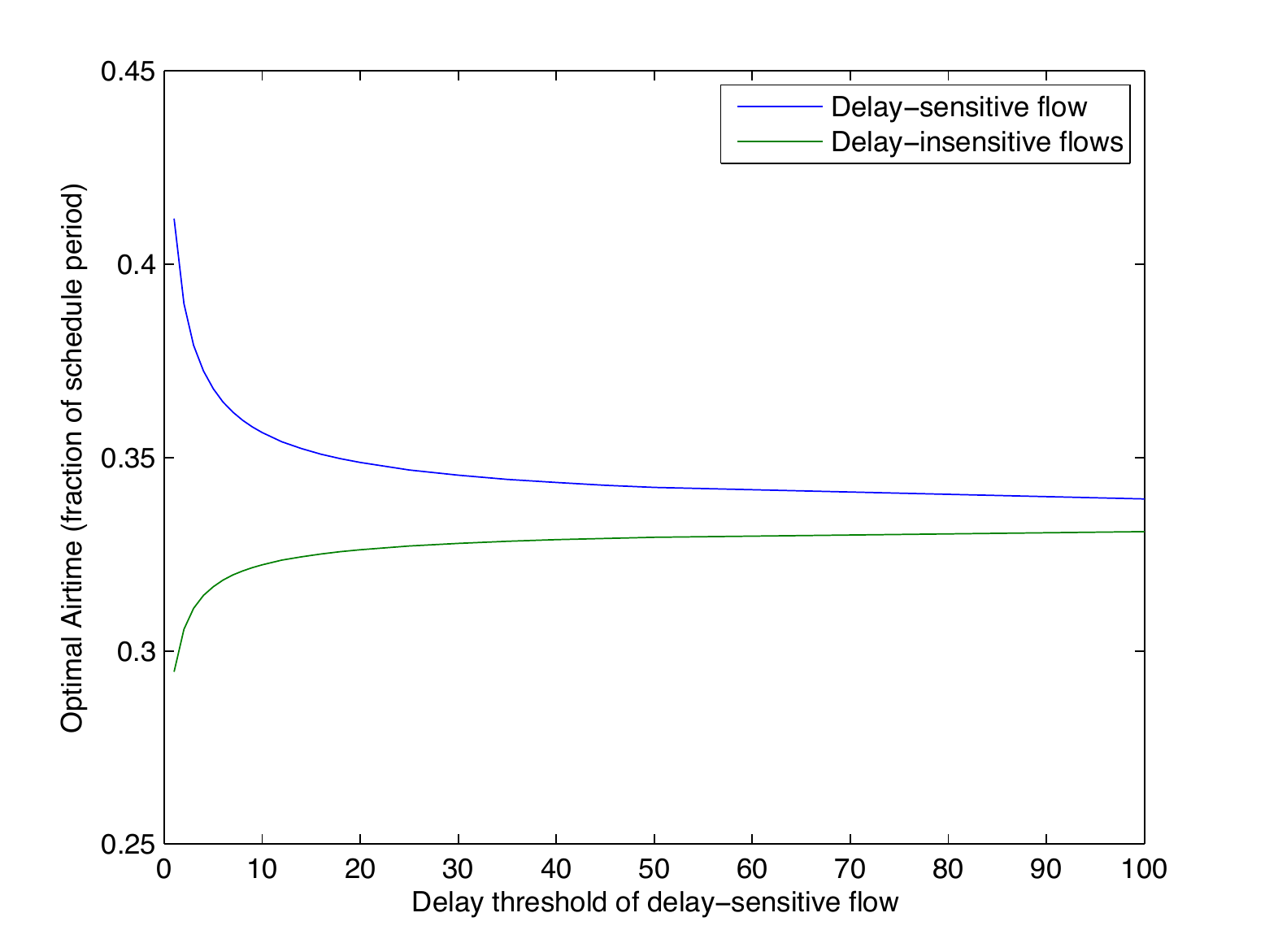}
}
\subfigure[Optimal airtime allocation vs N.  $D=1$.]{
\includegraphics[width=0.8\columnwidth]{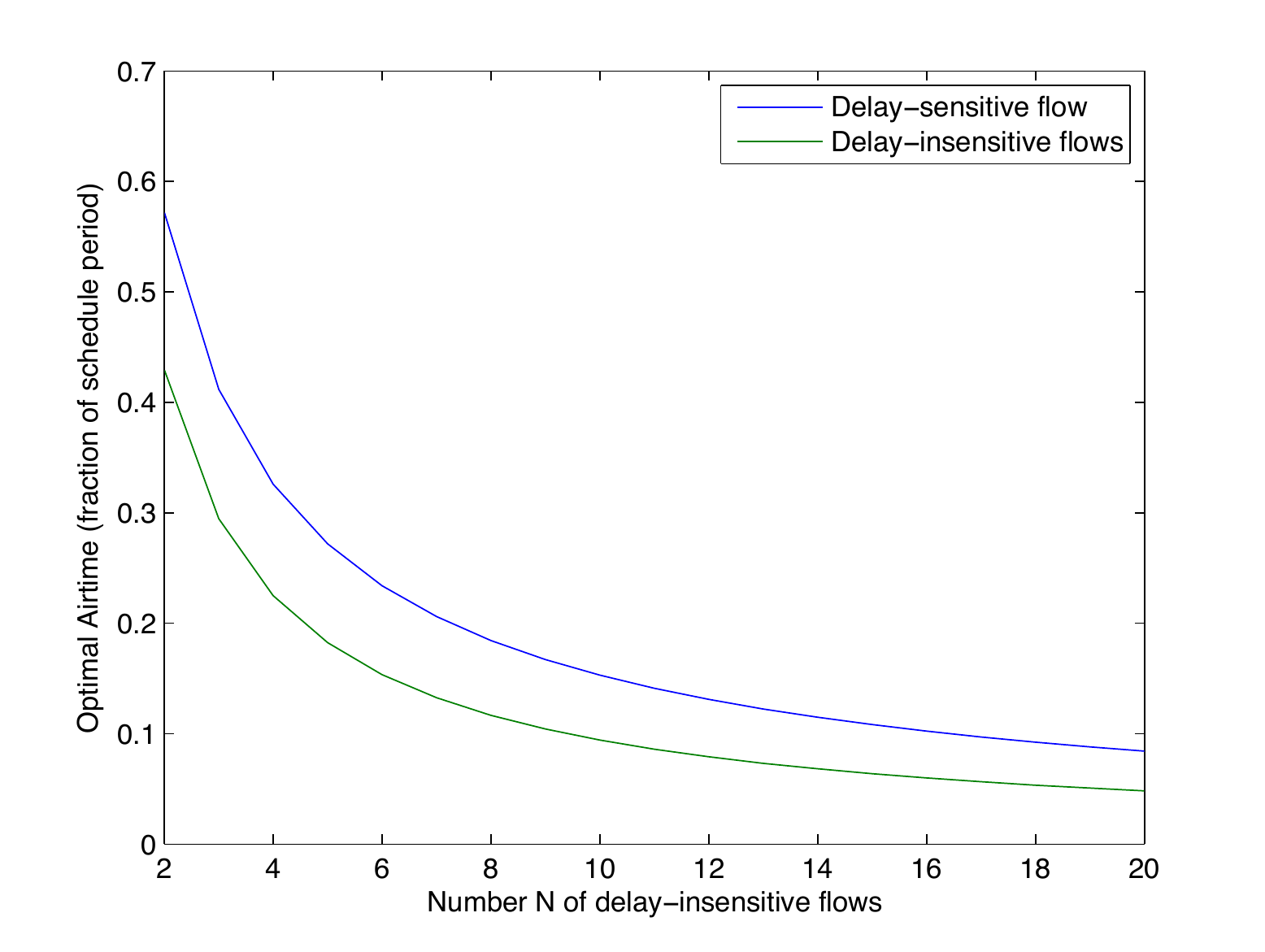}
}
\caption{Single WLAN with one delay--sensitive flow and $N$
delay--insensitive flows.  Delay sensitive flow has delay deadline $D$,
delay--insensitive flows have infinite delay deadlines.   Raw channel
symbol error rate is $10^{-2}$ for all flows, PHY rate for all flows is
10 symbols per schedule period.  Optimal airtimes are given as a
proportion of the schedule period.}
\label{fig:onecell1}
\end{figure}

\begin{figure}
\centering
\subfigure[Optimal airtime allocation vs channel symbol error rate for
flow 1, symbol error rate for flow 2 is held fixed at $10^{-2}$.]{
\includegraphics[width=0.8\columnwidth]{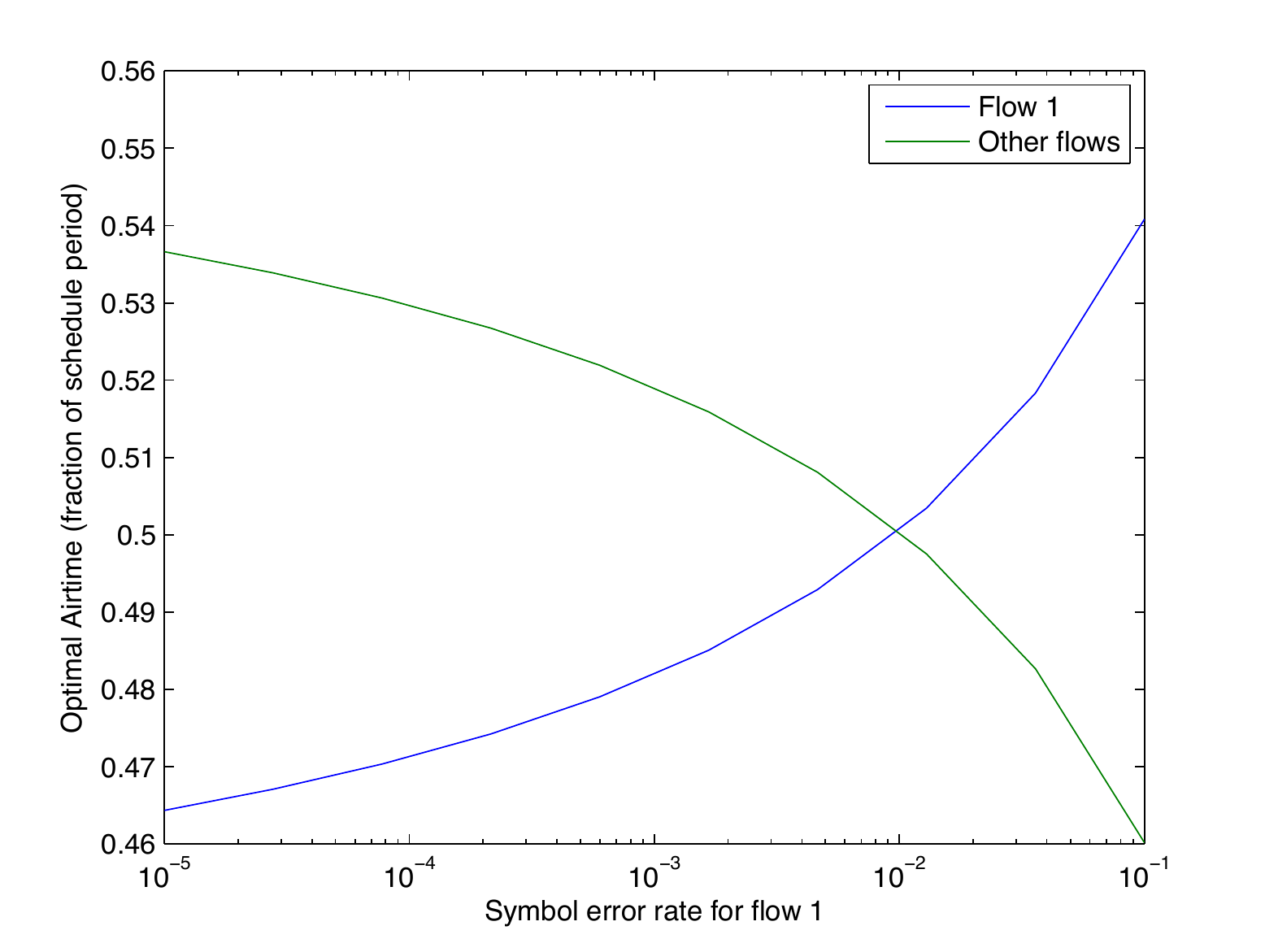}
}
\subfigure[Optimal airtime allocation vs PHY rate of flow 1, PHY rate
for flow 2 is held fixed at 10 symbols/schedule.]{
\includegraphics[width=0.8\columnwidth]{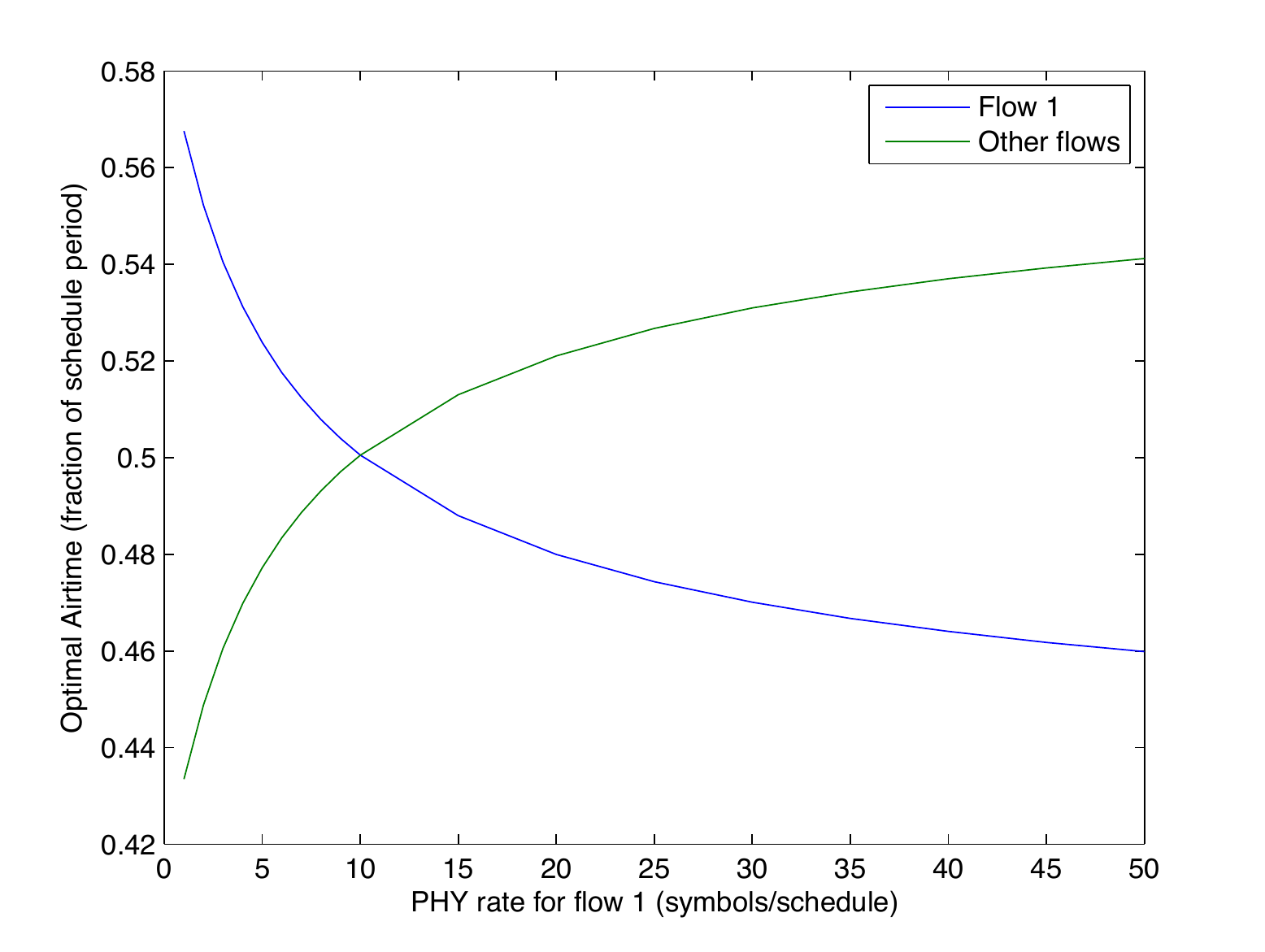}
}
\caption{Single WLAN with two delay--sensitive flows, both with delay
deadline $D=1$. In upper figure, PHY rate for both flows is 10 symbols
per schedule period and channel symbol error rate for flow 1 is varied.
In lower figure, channel symbol error rate for both flows is  $10^{-2}$,
and PHY rate for flow 1 is varied.}\label{fig:onecell2}
\end{figure}

\subsubsection{Mix of delay--sensitive and delay--insensitive flows}
Suppose the flows in the network belong to two classes, one of which is
delay--sensitive and has a delay--deadline $D$ whereas the other is
delay--insensitive i.e. has an infinite delay deadline. These classes
might correspond, for example, to video and data traffic. 
Figure~\ref{fig:onecell1}(a) plots the optimal airtime allocation as the
delay deadline $D$ is varied. In this example, there is a single
delay--sensitive flow and two delay--insensitive flows, and the airtime
allocation is shown for the delay--sensitive flow and for one of the
delay--insensitive flows (both receive the same airtime allocation). As
expected, it can be seen that the airtime allocations of the
delay--sensitive and delay--insensitive flows approach each other as the
delay deadline $D$ is increased. However, it is notable that they approach each other
fairly slowly, and when the delay deadline is low the airtime allocated
to the delay--sensitive flow is almost 50\% greater than that allocated
to a delay--insensitive flow.   This behaviour is qualitatively different
from the classical proportional fair allocation neglecting coding rate
and delay--deadlines, which would allocate equal airtimes to all flows.
By taking coding rate and delay deadlines into account, our approach
allows the resource allocation to flows with different quality of
service requirements  to be carried out in a principled and fair manner.

Figure \ref{fig:onecell1}(b) plots the optimal airtime allocation as the
number $N$ of delay--insensitive flows is varied.   It can be seen that
the airtime allocated to each flow decreases as $N$ is increased, as
expected since the number of flows sharing the network is increasing.
Interestingly, observe that the airtime allocated to the delay--sensitive
flow is a roughly constant margin above that allocated to the
delay--insensitive flows.   The delay--sensitive flow is therefore
``protected'' from the delay--insensitive flows.   However, in contrast
to ad hoc approaches, this protection is carried out in a
principled and fair manner.

\subsubsection{Mix of near and far stations}
Consider now a situation where all flows have the same delay deadline
$D$, but where for some flows the sources are located close to the
destination and for other flows the sources are further away. We
therefore have two classes of flows, one with a higher channel symbol
error rate than the other when both use the same PHY rate. Figure
\ref{fig:onecell2}(a)  plots the optimal airtime allocation for a flow
in each class as the channel error rate for one class is varied. When
the channel error rates for both classes is the same 
($\beta_f=10^{-2}$), it can be seen that the airtime allocation is the same. As the
channel error rate decreases, the airtime allocated to flow 1
decreases. Conversely, as the channel error rate increases, the
airtime allocated to flow 1 increases. 

Figure \ref{fig:onecell2}(b) plots the optimal airtime allocation when
flows in both classes have the same channel error rate but different PHY
rates i.e. where the PHY modulation has been adjusted to equalise the
channel error rates. When the PHY rates are the same ($w_{f,c}=10$
symbols per schedule period), the airtime allocation is the same to both
classes. As the PHY rate is increased, the airtime allocation for flow 1
decreases. Conversely, as the PHY rate is decreased, the airtime
allocation for flow 1 increases. Again, note that this is qualitatively different
from the classical proportional fair allocation neglecting coding rate
and delay--deadlines which would allocate equal airtimes to all flows.

\subsubsection{Unequal Airtimes}
\label{sec:unequal_airtimes}
The basic observations in these examples apply more generally.  In
particular, as noted above, in a loss-free, delay--insensitive single-cell network the proportional fair allocation is to assign equal air--time to all flows
(\cite{eq_air_time} and Section \ref{sec:lossfree}).   However, when delay deadlines are introduced and/or links are lossy, we see an interesting phenomenon. 
\begin{lemma}
\label{lem:unequal_air_time}
The optimum rate allocation ${\bm x}^*$ (or equivalently ${\bm r}^*$) is
not equivalent to an equal air--time allocation. 
\end{lemma}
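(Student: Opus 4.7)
The plan is to establish the lemma by exhibiting a network configuration in which the KKT characterisation of Section~\ref{sec:network_utility_optimisation} forces strictly unequal airtimes. Specialising Eqn.~\eqref{eqn:kkt-n} to a single cell $c$ and writing $\tau_f:=n_f^*/w_{f,c}$ for the airtime of flow $f$, one obtains
\[
 p_c\,\tau_f \;=\; 1 + \phi\!\bigl(D_f n_f^* I_f(x_f^*)\bigr),
\]
where $\phi(y):=y/(e^y-1)$ is a strictly decreasing bijection from $(0,\infty)$ onto $(0,1)$. Because a single cell has a single multiplier $p_c$ shared by all flows traversing it, equal airtime across $f\in\F_c$ would force the $\phi$-term, and hence its argument $D_f n_f^* I_f(x_f^*)$, to be identical for every such flow.

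The cleanest counterexample mixes delay profiles. Take a single cell shared by two flows with common PHY rate $w$ and common channel error rate $\beta>0$, with $D_1$ chosen arbitrarily large (delay-insensitive regime) and $D_2<\infty$. By the analysis in Section~\ref{sec:two_special_cases}, flow~1's argument $D_1 n_1^* I_1(x_1^*)$ diverges, so $\phi\to 0$ and $p_c\tau_1\to 1$. For flow~2 the feasible region requires $x_2^*>\beta$ (else $\theta_2^*\le 0$ and $e_2=1$), so $I_2(x_2^*)>0$ and $n_2^*>0$, giving $\phi(D_2 n_2^* I_2(x_2^*))>0$ and hence $p_c\tau_2>1$. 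Since $p_c$ is common to both flows, $\tau_2>\tau_1$ strictly, and continuity of the optimiser in $D_1$ extends this strict inequality from the limit to all sufficiently large but finite $D_1$.

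The main obstacle is making this local argument watertight: one must know that $p_c>0$ (so that dividing by it is legitimate) and that the schedulability constraint is binding at the optimum. Both follow from Slater's condition applied to the convex reformulation in Eqn.~\eqref{eqn:max_x_problem}, which yields strong duality and $p_c>0$ on any saturated cell (an unsaturated cell would be uninteresting, since then every allocation of the free airtime would be feasible and equal airtime could not be forced either). The remaining bookkeeping is routine: $n_f^*, I_f^*, \theta_f^*$ are all strictly positive at the optimum, so every inequality above is strict.

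A complementary counterexample, applicable when all $D_f$ are equal but the $\beta_f$'s differ, proceeds analogously. Equal airtime combined with \eqref{eqn:kkt-n} and \eqref{eqn:kkt-r} would force both $D_f n_f^* I_f(x_f^*)$ and $\tfrac{I_f(x_f^*)}{(1-2x_f^*)\theta_f^*(x_f^*)}$ to be independent of $f$; since $I_f,\theta_f^*$ depend on $\beta_f$, these two scalar conditions on the single unknown $x_f^*$ are generically inconsistent once $\beta_1\neq\beta_2$. Either route establishes the claim.
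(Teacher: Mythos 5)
Your proposal is correct and rests on the same idea as the paper's own (one--line) proof: the non--vanishing second term on the LHS of Eqn.~\eqref{eqn:kkt-n}, combined with the single shared multiplier $p_c$, rules out equal airtime when flows are heterogeneous. Your version simply fleshes this out with an explicit two--flow configuration and a limiting/continuity argument, which makes rigorous what the paper merely asserts.
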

\begin{proof}
See Appendix~\ref{app:unequal_air_time}
\end{proof}
In particular, flows that see a better channel get less
air--times than flows that see a worse channel. 

\subsection{Multiple cells}
\begin{figure}
\centering
\includegraphics[width=1.0\columnwidth]{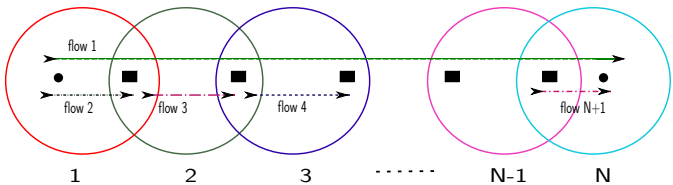}
   \caption{A linear Parking Lot network with $N$ cells and $N+1$ flows (one multi-hop flow and $N$ single-hop flows).}\label{multiple_cell}
\end{figure}

We now consider a mesh network consisting of $N$ cells carrying $N+1$
flows in the well-studied Parking Lot topology. The network topology is illustrated in
Fig.~\ref{multiple_cell}. The flows in this network can be
assigned to two classes: class 1 consists of the $N$-hop flow, and
class 2 consists of the single--hop flows 2, 3, $\cdots$, $N+1$.   Each cell has the same schedule period, i.e.
$T_c=T, \forall c\in\C$.

\subsubsection{Impact of number of hops}
Suppose that both classes of flow use the same symbol transmission PHY rate and experience the same loss rate in each cell.   Then the $N$-hop flow will experience a higher end-to-end symbol error rate that the single hop flows, and the loss rate will increase with $N$.   Fig.~\ref{MulticellE1} plots the ratio of
optimal airtimes allocated to each class of flow versus $N$.  Results are shown for three
delay deadline requirements: both classes of flow are delay--sensitive with delay deadline $D_1=D_2=1$; class
1 is delay--sensitive ($D_1=1$) while class 2 is delay--insensitive
($D_2=10^5$); class 1 is delay--insensitive while class 2 is
delay--sensitive. It can be seen that in the first case, where both classes have the
same delay deadline, the ratio of airtimes is larger than 1. This is in accordance with the
previous observation that flows with poorer channel
conditions are allocated more airtime than flows with better channel
conditions.  In the second case, where class 2 is delay--insensitive ($D_2=10^5$),
additional airtime is allocated to class 1, the delay--sensitive
flow, which also corresponds with the single cell analysis.
In the third case, where class 1 is delay--insensitive ($D_1=10^5$) and
class 2 is delay--sensitive, it can be seen that class 2 flows are
allocated slightly more airtime that the class 1 flow.   Interestingly,
however, observe that the airtime allocated to the class 1 flow is
insensitive to the number $N$ of hops.  This  contrasts with the
behaviour when the class 1 flow is delay--sensitive.

\subsubsection{Impact of different flow PHY rates}
Now consider a situation where the number of cells $N=3$ and all flows
have the same delay deadline $D_1=D_2=1$.   Flow 2 and flow 4 have
symbol error rate $10^{-4}$, and flow 1 and flow 3 have symbol error
rate $2.5\times10^{-1}$.   We classify the flows into three sets: class 1 consists of multi-hop flow 1, class 2 consists of single-hop flows 2 and 4, class 3 consists of single-hop flow 3.   Let $w$ denote the PHY rate used used by class 1 and class 2 flows, and $w_3$ denote the PHY rate used by the class 3 flow.   Fig.~\ref{MulticellE2} plots the optimal
coded packet size versus the ratio $w_3/w$.  We begin by observing that when $w_3/w=1$, all flows have the same PHY rate and  it can be seen that flows in classes 2 and 3 are allocated the same packet sizes (and so the same airtime).   Hence, although the flow in class 3 crosses a much more lossy link than the flows in class 2, the optimal allocation ensures that all of the single-hop flows have the same airtime.    The multi-hop flow in class 1 is allocated a smaller packet size (and so less airtime) than the single hop flows.  It can also be
seen that varying the PHY rate for the single-hop flow in class 3 does not affect the
optimal coded packet sizes of flows in class 1 and class 2, and hence the airtime
of class 1 and class 2 flows remains the same as $w_3$ is varied.    The
coded packet size of the class 3 flow increases linearly with $w_3/w$,
and so the airtime of the class 4 flow remains invariant as well.  

\begin{figure}
\centering
\includegraphics[width=0.8\columnwidth]{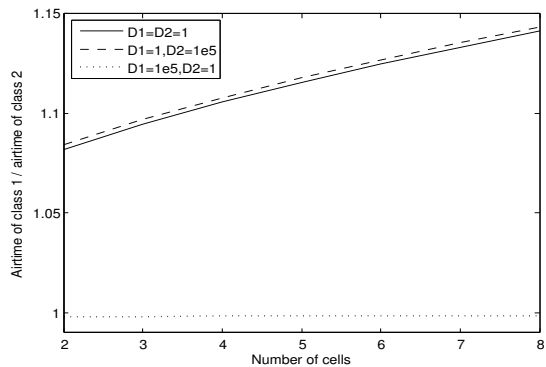}
   \caption{Ratio of airtimes vs. number $N$ of cells in Parking Lot topology of Fig.~\ref{multiple_cell}.  The y-axis is the ratio of the airtime allocated to the $N$-hop flow to that allocated to a single hop flow; note that the airtime of the $N$-hop flow is the sum of allocated airtime in each cell along the flows route.  Data is shown for three different delay deadline requirements, as indicated in the legend. All flows have the same PHY rate. }\label{MulticellE1}
\end{figure}

\begin{figure}
\centering
\includegraphics[width=0.8\columnwidth]{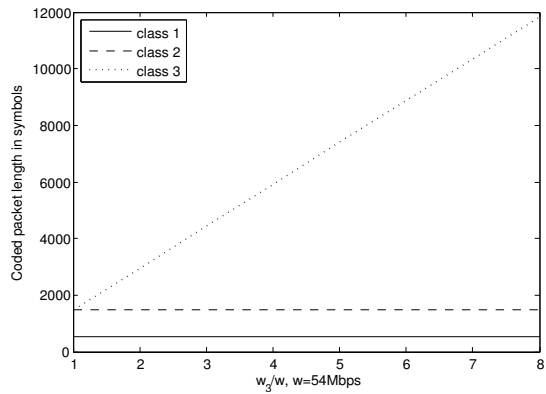}
   \caption{Coded packet size vs. ratio of PHY rates $w_3/w$ for Parking Lot topology of Fig.~\ref{multiple_cell} with $N=3$ cells.   Class 1 consists of multi-hop flow 1, class 2 consists of single-hop flows 2 and 4, class 3 consists of single-hop flow 3; class 1 and 2 flows use PHY rate $w$ bit/sec, the class 3 flow uses a PHY rate of $w_3$ bit/sec.   All flows have delay deadline $D=1$.}\label{MulticellE2}
\end{figure}

\section{Conclusions}
\label{sec:conclusions}
In this paper, we posed a utility fair problem that yields the optimum
airtime and the coding rate across flows in a capacity constrained multi-hop
network with delay deadlines.  We showed that the problem is highly non--convex.  Nevertheless, we
demonstrate that the global network utility
optimisation problem can be solved.  We obtained the optimum airtime/packet
size, channel coding rate, and analysed its properties. We also
analysed some simple networks based on the utility optimum framework we
proposed. To the best of our knowledge, this is the first work on
cross--layer optimisation that studies optimum coding across flows which
are competing for network resources and have delay--deadline constraints.

\appendices
\section{Proof of Lemma~\ref{lem:pe_bound}}
\label{app:proof_lemma_bound_pe}

\begin{align*}
\widetilde{e}_f 
&= \myprob{\sum_{i=1}^{D_fn_f}E_f[i] > \frac{D_fn_f-D_fk_f}{2}}\\
&= \myprob{\sum_{i=1}^{D_fn_f}E_f[i] > D_fn_f\left(\frac{1-r_f}{2}\right)}\\
&\leqslant \exp\left(-D_fn_f\cdot\theta_f\frac{1-r_f}{2}\right)
\myexp{\exp\left(\theta_f\sum_{i=1}^{D_fn_f}E_f[i]\right)} \\
&= \exp\left(-D_fn_f\cdot\theta_f\frac{1-r_f}{2}\right) \left[
\myexp{\exp\bigg(\theta_f E_f[1]\bigg)}\right]^{D_fn_f} \\
&= \exp\left(-D_fn_f\cdot\theta_f\frac{1-r_f}{2}\right)
\exp\left( D_fn_f\ln\bigg(\myexp{e^{\theta_f E_f[1]}}\bigg)\right) \\
&= \exp\left(-D_fn_f\left[\theta_f\frac{1-r_f}{2} -  \ln\bigg(\myexp{e^{\theta_f E_f[1]}}\bigg)\right]\right) \\
&= \exp\left(-D_fn_f
I_{E_f[1]}\left(\frac{1-r_f}{2};\theta_f\right)\right)
\end{align*}
$\hfill\blacksquare$

\section{Proof of Lemma~\ref{lem:pe_lower_bound}}
\label{app:proof_lemma_lower_bound_pe}.
\begin{align*}
\widetilde{e}_f 
&=    \ \myprob{\sum_{i=1}^{D_fn_f}E_f[i] > \frac{D_fn_f-D_fk_f}{2}} \nn
&=    \ \myprob{\sum_{i=1}^{D_fn_f}E_f[i] > D_fn_f \frac{1-r_f}{2}} \nn
&= \sum_{i=D_fn_f\frac{1-r_f}{2}+1}^{D_fn_f}
{D_fn_f \choose i} \beta_f^i (1-\beta_f)^{D_fn_f - i}. 
\end{align*}
The binomial coefficients can be bounded as follows:
\begin{align*}
1 \leqslant  \ {n \choose k} \ = \ \frac{n(n-1)\cdots(n-k+1)}{1\cdot2\cdots k} 
\ \leqslant n^k. 
\end{align*}
Hence,
\begin{align*}
 \widetilde{e}_f 
&\geqslant \sum_{i=D_fn_f\frac{1-r_f}{2}+1}^{D_fn_f}
\beta_f^i (1-\beta_f)^{D_fn_f- i} \\
& \geqslant \frac{\beta_f}{1-\beta_f} \beta_f^{D_fn_f(\frac{1-r_f}{2})}
(1-\beta_f)^{D_fn_f(\frac{1+r_f}{2})} \\
& = \frac{\beta_f}{1-\beta_f} 
\exp\left(-D_fn_f\left[ \left(\frac{1-r_f}{2}\right) \ln(1/\beta_f)
\right]\right)\nn  
& \exp\left(-D_fn_f\left[ 
 \left(\frac{1+r_f}{2}\right) \ln(1/(1-\beta_f))  \right]\right)\\
& = \frac{\beta_f}{1-\beta_f} \exp\left(-D_fn_f\left[
 \left(\frac{1-r_f}{2}\right)
\ln\left(\frac{1-r_f}{2\beta_f}\right) \right]\right)\nn
& \exp\left(-D_fn_f\left[
 \left(\frac{1+r_f}{2}\right)
\ln\left(\frac{1+r_f}{2(1-\beta_f)}\right)  \right]\right)\nn
& \ \ \ \
\cdot \exp\left(D_fn_f\left[
 \left(\frac{1-r_f}{2}\right)
\ln\left(\frac{1-r_f}{2}\right)\right]\right)\nn
& 
\cdot \exp\left(D_fn_f\left[
 \left(\frac{1+r_f}{2}\right)
\ln\left(\frac{1+r_f}{2}\right)  \right]\right)\\
& = \frac{\beta_f}{1-\beta_f} 
 \exp\left(-D_fn_f H\left({\cal
 B}\left(\frac{1-r_f}{2}\right)\right)\right)\nn
 &
\exp\left(-D_fn_f I\left({\cal B}\left(\frac{1-r_f}{2}\right)\|{\cal
B}(\beta_f)\right)\right)
\end{align*}
\normalsize
$\hfill\blacksquare$


\section{Proof of Lemma~\ref{lem:opt_joint_vs_sep}}
\label{app:proof_lemma_opt_joint_vs_sep}
For any $z \in {\cal Z}$, the function $f(y,z)$ is concave in $y$.
Hence, for each $z$, there exists a unique maximum $y^+(z)$, which   
is given by
\begin{eqnarray*}
f(y^+(z),z) & = & \max_{y\in{\cal Y}} \ f(y,z)\nn
& =: & g(z)
\end{eqnarray*}
If $f(y^+(z),z)$ is a concave function of $z$, then there exists a
unique maximiser, which is denoted by $z^+$, i.e., 
\begin{eqnarray*}
z^+ & = & \arg \max_{z \in {\cal Z}} \ f(y^+(z),z).
\end{eqnarray*}
We show that $(y^+(z^+),z^+)$ is an optimum solution to 
Eqn.~\eqref{eqn:joint_opt_prob}. Since $z^+$ is the maximiser of $g$, we
have for any $z \in {\cal Z}$,
\begin{eqnarray*}
g(z^+) & \geqslant & g(z) \\ 
\text{or} \ \ \  f(y^+(z^+),z^+) & \geqslant & f(y^+(z),z). 
\end{eqnarray*}
For any given $z \in {\cal Z}$, $y^+(z)$ is the maximiser of
$f(y,z)$ over all $y \in {\cal Y}$, i.e.,
\begin{eqnarray*}
f(y^+(z),z) & \geqslant & f(y,z),
\end{eqnarray*}
and hence, for all $(y,z) \in {\cal Y}\times{\cal Z}$,
\begin{eqnarray*}
f(y^+(z^+),z^+) \ \geqslant \ f(y^+(z),z) \ \geqslant \ f(y,z).
\end{eqnarray*}
We note that $y^+(\cdot)$ maps ${\cal Z}$ into
${\cal Y}$, and hence, $(y^+(z^+),z^+) \in {\cal Y}\times{\cal Z}$.
Hence, $(y^+(z^+),z^+)$ is a global maximiser.
$\hfill\blacksquare$

\section{Proof of Lemma~\ref{lemma:convexity_of_log_tx}}
\label{app:proof_lemma_convexity_of_log_tx}
Consider the optimisation problem, 
\begin{eqnarray*}
\max_{\widetilde{\bm n},\widetilde{\bm I}} & & \sum_{f \in \F}
\widetilde{n}_f + \ln(1-2x_f(\widetilde{I}_f)) + \ln(1-e^{-D_f
\exp(\widetilde{n}_f+\widetilde{I}_f)})\nn
\text{s.t.} \  & &  \sum_{f: c \in \C_f}
\frac{e^{\widetilde{n}_f}}{w_{f,c}} \leqslant T_c, \ \ \forall c \in \C
\end{eqnarray*}

We show that the objective function is jointly (strictly) concave in 
$(\widetilde{\bm n},\widetilde{\bm I})$. The objective
function is separable in 
$(\widetilde{n}_f,\widetilde{I}_f)$, and we show that 
$x_f(\widetilde{I}_f)$ is convex, and 
$\ln(1-e^{-D_f\exp(\widetilde{n}_f+\widetilde{I}_f)})$ is concave.

Since, for $x_f \in (\beta_f, 0.5)$, $I_f$ is a monotone function of
$x_f$, and $\widetilde{I}_f$ is a monotone function of
$I_f$, it is clear that $\widetilde{I}_f$ is invertible. Note
that 
\begin{eqnarray*}
\widetilde{I}_f & = & \ln(I_f) \nn
\frac{d\widetilde{I}_f}{dx_f} & = & \frac{\theta^*_f(x_f)}{I_f} \nn
\frac{dx_f}{d\widetilde{I}_f} & = & \frac{I_f}{\theta^*_f(x_f)} \nn
\frac{d^2x_f}{d\widetilde{I}_f^2} & = & \frac{I_f}{\theta^*_f(x_f)} 
 \left[
1 - 
 \frac{I_f}{\theta^*_f(x_f)^2} \frac{1}{x_f(1-x_f)}
\right]
\end{eqnarray*}
Define $g(x_f) := x_f(1-x_f)\theta_f^*(x_f)^2-I_f$. If $g(x_f) > 0$,
then $x_f(\widetilde{I}_f)$ is (strictly) convex.  
Note that 
$g'(x_f) = (1-2x_f)\theta_f^*(x_f)^2 + \theta_f^*(x_f)$ 
is increasing with $x_f$, and hence, $g(x_f) > g(\beta_f)=0$.

Define $h(x,y)= e^{x+y}$. consider the function 
\begin{eqnarray*}
f(x,y) & = & \ln(1-e^{-D_f g(x,y)}) \nn
\frac{\partial f}{\partial x} & = & \frac{D_fe^{-D_f g}}{1-e^{-D_f g}}
\frac{\partial g}{\partial x}\nn
\frac{\partial f}{\partial x} & = & \frac{D_f g e^{-D_f g}}{1-e^{-D_f
g}}.\nn
\text{Similarly,} \ \ \   
\frac{\partial f}{\partial y} & = & \frac{D_f g e^{-D_f g}}{1-e^{-D_f
g}}.
\end{eqnarray*}
Also,
\begin{eqnarray*}
\frac{\partial^2 f}{\partial x \partial y} & = & 
\frac{-D_f^2 g^2 e^{-2D_f g}}{(1-e^{-D_f g})^2} + 
\frac{D_f g e^{-D_f g}}{1-e^{-D_f g}} - 
\frac{-D_f^2 g^2 e^{-D_f g}}{1-e^{-D_f g}}\nn
&=&
\frac{-D_f g}{(1-e^{-D_f g})^2}\cdot \left[
D_f g e^{-2D_fg} + \right. \nn
& & \left. D_f g (1-e^{-D_fg})e^{-D_fg} - (1-e^{-D_fg})e^{-D_fg}
\right]\nn 
&=&
\frac{-D_f g}{(1-e^{-D_f g})^2} \left[
D_f g  - (1-e^{-D_fg})e^{-D_fg}
\right]
\end{eqnarray*}
Similarly, one can show that 
$\frac{\partial^2 f}{\partial x^2}  =  
\frac{\partial^2 f}{\partial y^2}  =  
\frac{\partial^2 f}{\partial x\partial y}$. Define $\ell(x) =
D_f x  - (1-e^{-D_fx})e^{-D_fx}$. If $\ell(x) > 0$, then $f(x,y)$ is
(strictly) convex. Note that $\ell'(x) = (D_f - D_fe^{-2D_fx}) + (D_fe^{-D_f x} -
D_f e^{-2D_fx}) > 0$. Therefore, $\ell(x) > \ell(0) = 0$.
$\hfill\blacksquare$

\section{Proof of Lemma~\ref{lem:unequal_air_time}}
\label{app:unequal_air_time}

From Eqn.~\eqref{eqn:kkt-n}, it is clear that even for a single cell,
because of the non--zero second term in the LHS, the air--time of flow
$f$ given by $\frac{n_f}{w_{f,c}}$ is not the same for all the flows
$f$. 
$\hfill\blacksquare$
\bibliographystyle{IEEEtran}
\bibliography{IEEEabrv,allerton}

\end{document}